\newtheorem{theorem}{{Theorem}}
\newtheorem{lemma}[theorem]{{Lemma}}
\newtheorem{proposition}[theorem]{{Proposition}}
\theoremstyle{definition}
\newtheorem{definition}{Definition}
\newtheorem{remark}{{\textbf{Remark}}}
\newtheorem{example}{{Example}}
\def\BibTeX{{\rm B\kern-.05em{\sc i\kern-.025em b}\kern-.08em
    T\kern-.1667em\lower.7ex\hbox{E}\kern-.125emX}}
\begin{document}

\IEEEoverridecommandlockouts
\title{Projective Systematic Authentication via Reed-Muller Codes
\thanks{This work was supported by the Center for Ubiquitous Connectivity (CUbiC), sponsored
by Semiconductor Research Corporation (SRC) and Defense Advanced Research Projects
Agency (DARPA) under the JUMP 2.0 program.}}

\author{%
  \IEEEauthorblockN{Hsuan-Po Liu\IEEEauthorrefmark{1} and Hessam Mahdavifar\IEEEauthorrefmark{1}\IEEEauthorrefmark{2}}
  \IEEEauthorblockA{\IEEEauthorrefmark{1}Department of Electrical Engineering and Computer Science, University of Michigan, Ann Arbor, MI 48109, USA}
  \IEEEauthorblockA{\IEEEauthorrefmark{2}Department of
Electrical and Computer Engineering, Northeastern University, Boston, MA
02115, USA}
  \IEEEauthorblockA{Emails: hsuanpo@umich.edu, h.mahdavifar@northeastern.edu}
}

\maketitle

\begin{abstract}
In this paper, we study the problem of constructing projective systematic authentication schemes based on binary linear codes. 
In systematic authentication, a tag for authentication is generated and then appended to the information, also referred to as the source, to be sent from the sender.
Existing approaches to leverage \textit{projective} constructions focus primarily on codes over large alphabets, and the projection is simply into one single symbol of the codeword.
In this work, we extend the projective construction and propose a general projection process in which the source, which is mapped to a higher dimensional codeword in a given code, is first projected to a lower dimensional vector. The resulting vector is then masked to generate the tag. To showcase the new method, we focus on leveraging binary linear codes and, in particular, Reed-Muller (RM) codes for the proposed projective construction. 
More specifically, we propose systematic authentication schemes based on RM codes, referred to as RM-A-codes. 
We provide analytical results for probabilities of deception, widely considered as the main metrics to evaluate the performance of authentication systems. 
Through our analysis, we discover and discuss explicit connections between the probabilities of deception and various properties of RM codes. 


\end{abstract}

\section{Introduction}
\label{sec:intro}

With the rapid expansion of wireless networks, the need for providing message integrity and authenticity has become increasingly crucial, and is widely regarded as one of the major goals of cryptography systems \cite{Hakeem2022SecurityRA}. Authentication codes were first introduced in \cite{gilbert1974codes}. A theoretical framework for authentication was then introduced by Simmons \cite{Simmons1985AuthenticationTT}, which considers an unconditionally secure authentication system, i.e., where the adversary may have unlimited computational power. 

A conventional authentication system involves three parties: a \emph{sender} who sends a message, a \emph{receiver} who is the intended recipient of the message, and an \emph{adversary} who attempts to attack by either impersonating the sender and inserting a message into the channel, or substituting an intercepted message with a fraudulent one. These two types of attacks are termed the \emph{impersonation} attack and the \emph{substitution} attack, respectively. The communication is assumed to take place over a public channel. To protect the system from the aforementioned attacks, the sender and the receiver utilize a shared secret key, known only to them, which is then used in the encoding rule of the underlying authentication code. The probabilities of successful impersonation and substitution attacks by the adversary are the probabilities of deception, which are considered to evaluate the performance of the authentication system.

A vast body of work is dedicated to designing authentication codes with various methodologies and under various constraints. Some of the major approaches to this problem include geometric codes \cite{bierbrauer1994families,bierbrauer1997universal}, nonlinear functions \cite{ chanson2003cartesian, ding2004systematic,carlet2006authentication}, algebraic constructions \cite{helleseth1996universal, xing2000constructions, ozbudak2006some}, and error-correcting codes \cite{ johansson1994relation, kabatianskii1996cardinality, wang2003linear, ding2005coding, ding2007generic, liu2018new}. 
In this work, we focus on authentication codes without secrecy, which are also known as \emph{systematic} authentication codes, constructed via error-correcting codes using a \emph{projective} construction \cite{ ding2005coding ,ding2007generic, liu2018new }. 
In systematic authentication codes, a message is sent from the sender to the receiver through the public channel, including the source state (i.e., plaintext), appended with a tag. The tag is generated by an encoding rule from a shared secret key between the sender and the receiver. 
The \textit{projective} constructions proposed in prior works \cite{ ding2005coding ,ding2007generic, liu2018new }, are specifically for codes over rather large underlying alphabets, and the projection is simply into one single symbol of the codeword. 

In this paper, we extend upon the projective authentication methods and consider a general notion of projecting higher dimensional codewords generated by a specific error-correcting code to a lower dimensional vector, e.g., sub-blocks of codewords. This constitutes a major building block of the system. More specifically, the secret key is split into two sub-parts. The first part is used to indicate the subset of the codeword to be projected to the lower dimension, where it is masked by the second part of the key. In order to showcase the proposed scheme, we focus on designing new projective authentication codes based on binary linear codes, and in particular, Reed-Muller (RM) codes, referred to as RM-A-codes. The main motivation behind this choice is to demonstrate that the already existing physical layer blocks for binary error correction can be leveraged for authentication as well, leading to potential solutions for low-complexity low-cost communication devices in massive networks, such as in Internet-of-Things (IoT) networks. RM codes are one of the oldest families of codes, which have received renewed attention in recent years due to their capacity-achieving properties \cite{abbe2023reed,reeves2023reed} as well as their excellent performance in short blocklengths \cite{abbe2020reed,jamali2021reed,jamali2023machine}. We demonstrate that RM codes are a perfect fit as a building block for the proposed systematic authentication based on binary linear codes, and present closed-form expressions for the probability of success of attacks by an adversary in the considered authentication system. 
As for the probability of success for substitution attacks, we show that our construction reduces the computationally expensive calculations for characterizing the guarantees, and identifies the performance under different settings considering RM codes. 

The rest of the paper is structured as follows.
In Section~\ref{sec:prelim}, we provide the preliminaries of systematic authentication codes. 
In Section~\ref{sec:protocol}, we present the proposed projective constructions.
We then analyze the theoretical results for our proposed construction in Section~\ref{sec:ana}.
Finally, we conclude the paper in Section~\ref{sec:con}.

\section{Preliminaries}
\label{sec:prelim}


\subsection{Systematic authentication codes}
The systematic authentication code is defined as a four-tuple $(\mathcal{S},\mathcal{T},\mathcal{K},\{\mathcal{E}_\mathbf{k}:\mathbf{k}\in\mathcal{K}\})$, where $\mathcal{S}$ is the source space, $\mathcal{T}$ is the tag space, $\mathcal{K}$ is the key space, and $\mathcal{E}_\mathbf{k}:\mathcal{S}\mapsto\mathcal{T}$ is the encoding rule.
During the authentication phase, which may happen only once at the beginning of the communication or at the beginning of every new round of communication, the sender first generates a source $\mathbf{s}\in\mathcal{S}$. Then given the secretly shared key $\mathbf{k}$ between the sender and the receiver, the encoding rule $\mathcal{E}_\mathbf{k}:\mathcal{S}\mapsto\mathcal{T}$ generates a tag $\mathbf{t}=\mathcal{E}_{\mathbf{k}}(\mathbf{s})\in\mathcal{T}$. The message $\mathbf{m}\in\mathcal{M}=\mathcal{S\times T}$ sent from the sender to the receiver is then denoted by the concatenation of $\mathbf{s}$ and $\mathbf{t}$ as $\mathbf{m}=[\mathbf{s},\mathbf{t}]$. 
When the receiver receives a message (which includes a source vector and a tag vector), it checks the authenticity by verifying whether the received tag can be generated from the secretly shared key $\mathbf{k}$ through the encoding rule $\mathcal{E}_\mathbf{k}:\mathcal{S}\mapsto\mathcal{T}$ or not. If yes, the receiver accepts the received message; otherwise, the receiver discards it.
We summarize the systematic authentication system in Fig.~\ref{fig:sys}.
Note that we will specify the dimension for each of the underlying vectors in the next section when we formally propose our construction.

\begin{figure*}[t]
\centering
{\includegraphics[width=.93\textwidth]{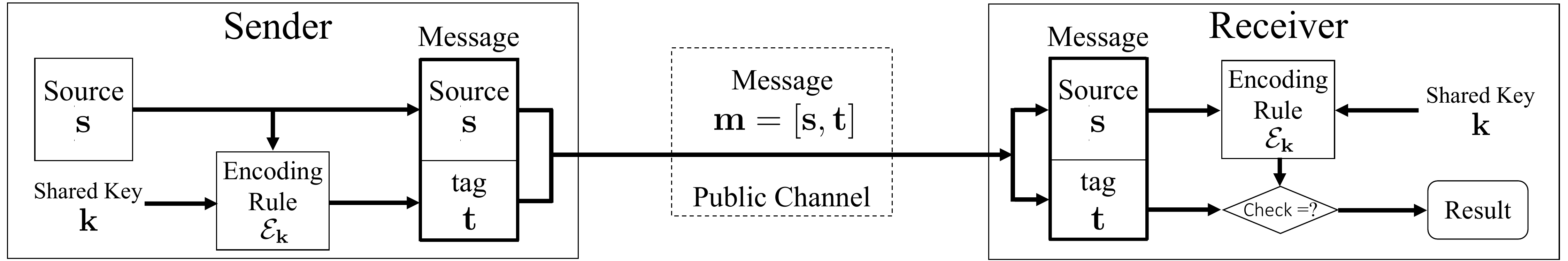}}\\
  \caption{Systematic authentication system}\label{fig:sys}
\end{figure*}

The adversary is assumed to have the ability to insert messages into the public channel and/or to intercept messages that are sent over the public channel and to modify them.
Two types of attacks are often considered in the authentication systems, referred to as the impersonation attack and the substitution attack. An impersonation attack occurs when the adversary inserts a new message $\mathbf{m}^\prime=[\mathbf{s}^\prime,\mathbf{t}^\prime]$ into the public channel, see Fig.~\ref{fig:PI}.
A substitution attack is when the adversary observes a message $\mathbf{m}=[\mathbf{s},\mathbf{t}]$ that exists in the public channel, intercepts it, then inserts a new message $\mathbf{m}^\prime=[\mathbf{s}^\prime,\mathbf{t}^\prime]$ into the channel, where $\mathbf{s}^\prime\neq\mathbf{s}$, this attack is demonstrated in Fig.~\ref{fig:PS}. 
The security guarantees of the system are measured in terms of the adversary’s probability of success with respect to each of the attacks. The probabilities of success for the impersonation and the substitution attacks by the adversary are denoted by $P_\mathrm{I}$ and $P_\mathrm{S}$, respectively. These quantities are defined more explicitly in the next subsection. 


\begin{figure}
\captionsetup{justification=centering}
\centering
\subfloat[Impersonation attack]{%
\includegraphics[width=0.24\textwidth]{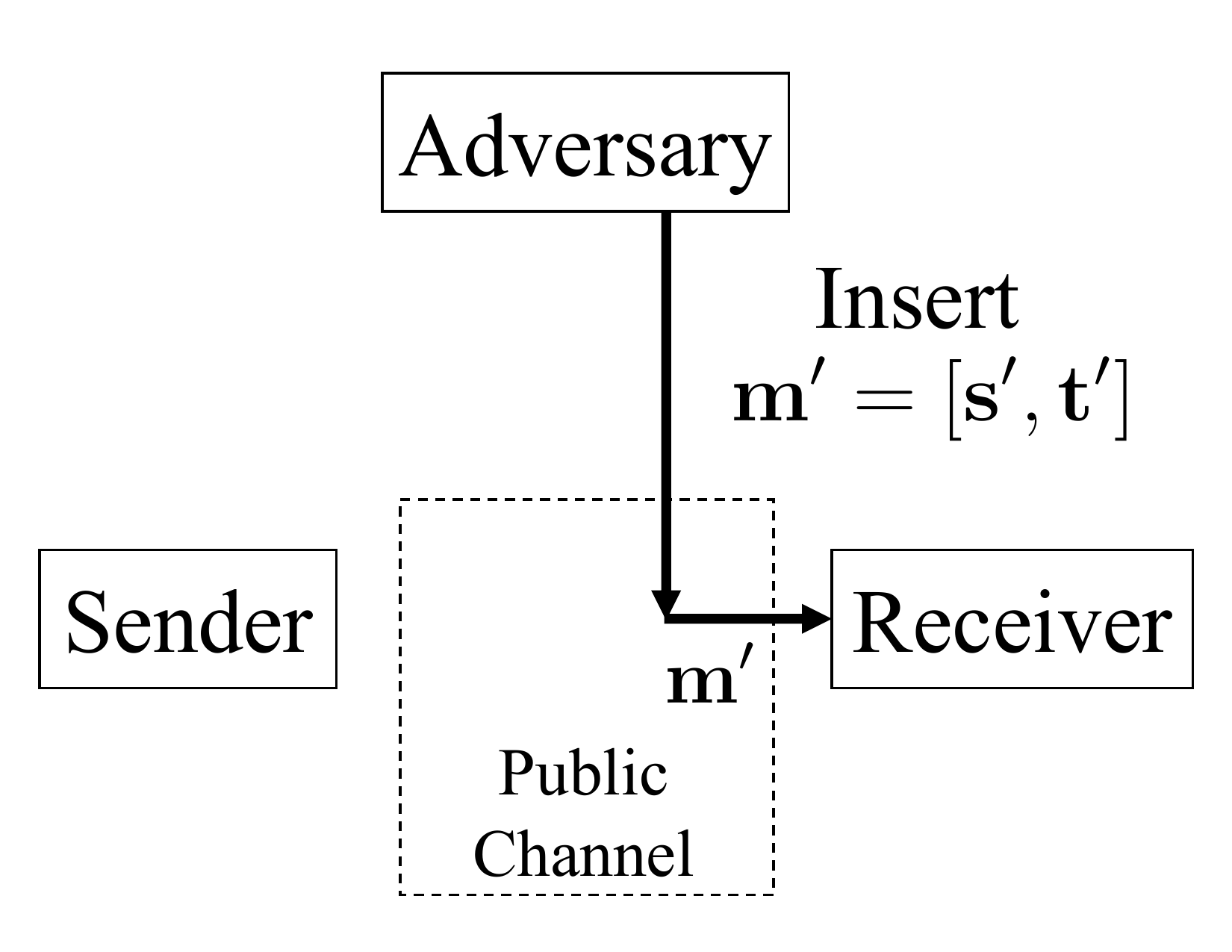}\label{fig:PI}}\hfill
\subfloat[Substitution attack]{
\includegraphics[width=0.24\textwidth]{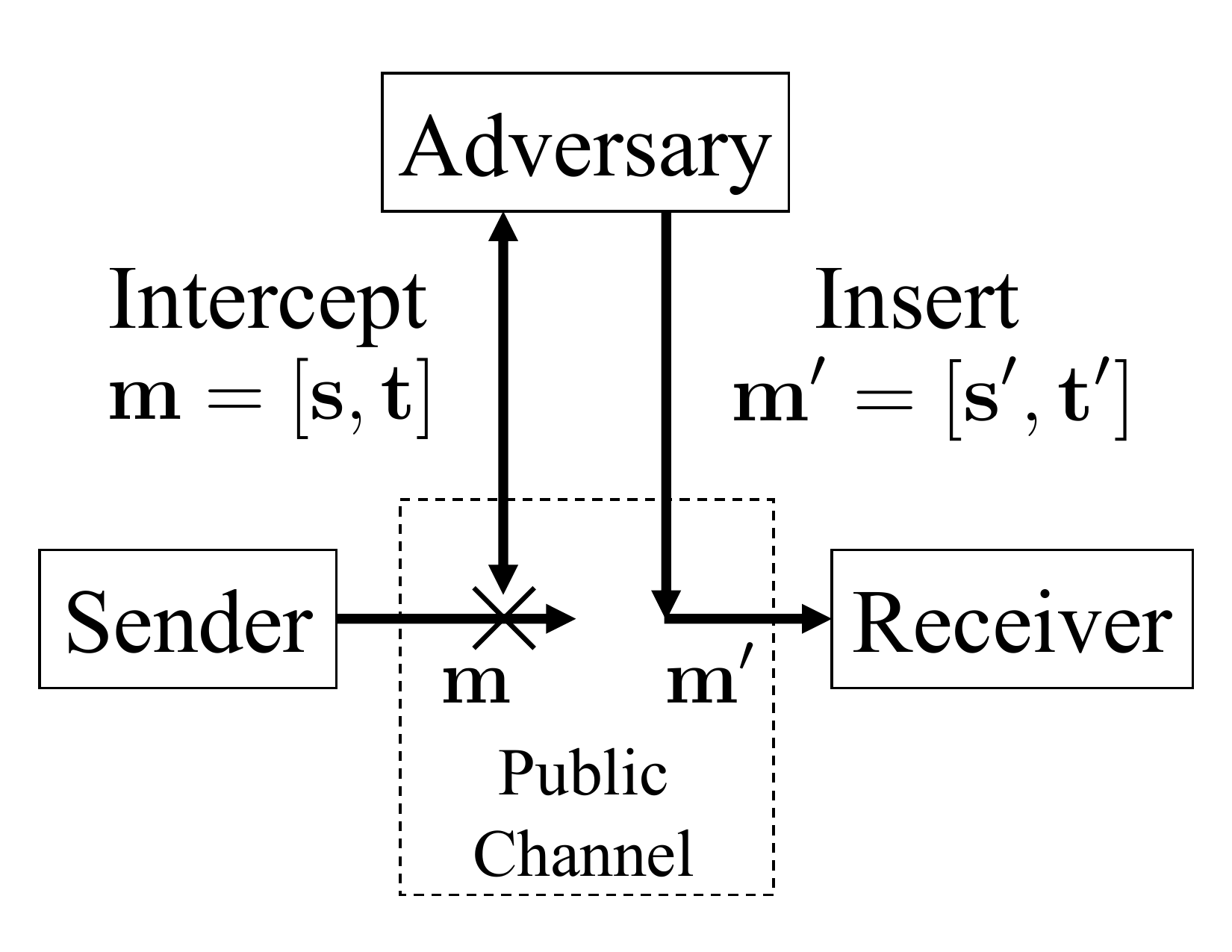}\label{fig:PS}}\hfill
\caption{Two types of attacks}\label{fig:attacks}
\end{figure}

\subsection{Probability of deception}
The probability of success of the impersonation attack is denoted by $P_\mathrm{I}$ and defined as

\begin{equation}
\begin{aligned}
P_\mathrm{I}:=&\max_{\mathbf{s}^\prime,\mathbf{t}^\prime}\,\mathbb{P}\left([\mathbf{s}^\prime,\mathbf{t}^\prime]\textrm{\,valid}\right),
\end{aligned}
\end{equation}
where
$$\mathbb{P}([\mathbf{s}^\prime,\mathbf{t}^\prime] \textrm{\,valid}) = \frac{|\{\mathbf{k}\in\mathcal{K}:\mathbf{t}^\prime=\mathcal{E}_\mathbf{k}(\mathbf{s}^\prime)\}|}{|\{\mathbf{k}\in\mathcal{K}\}|}.$$
Note that this probability is with respect to the space of all possible keys under a uniform distribution. 


The probability of success of the substitution attack is denoted by $P_\mathrm{S}$ and defined as

\begin{equation}
\label{eq:P_S_worst}
\begin{aligned}
P_\mathrm{S}:=\max_{\mathbf{s},\mathbf{t}}\max_{\mathbf{s}^\prime\neq\mathbf{s},\mathbf{t}^\prime}\,\mathbb{P}([\mathbf{s}^\prime,\mathbf{t}^\prime]  \textrm{\,valid}\,|\,[\mathbf{s},\mathbf{t}]  \textrm{\,observed}),
\end{aligned}
\end{equation}
where 
\begin{equation}
\label{eq:P_S_payoff}
    \mathbb{P}([\mathbf{s}^\prime,\mathbf{t}^\prime] \textrm{\,valid}\,|\,[\mathbf{s},\mathbf{t}] \textrm{\,observed})=
\frac{\left|\left\{\mathbf{k}\in\mathcal{K}:\begin{matrix*}[l]
    \mathbf{t}=\mathcal{E}_\mathbf{k}(\mathbf{s}),
    \\
    \mathbf{t}^\prime=\mathcal{E}_\mathbf{k}(\mathbf{s}^\prime)
\end{matrix*}\right\}\right|}
{|\{\mathbf{k}\in\mathcal{K}:\mathbf{t}=\mathcal{E}_\mathbf{k}(\mathbf{s})\}|}.
\end{equation}
In this analysis, it is assumed that both the key and the source state are from uniform distributions on the key space and the source space, respectively.

It is shown in \cite{stinson1992combinatorial} that we have $P_\mathrm{S}\geq P_\mathrm{I}\geq \frac{1}{|\mathcal{T}|}$. The core design criterion for constructing a good systematic authentication code is minimizing the probabilities of deception $P_\mathrm{I}$ and $ P_\mathrm{S}$.

\subsection{RM codes: A brief overview}
An RM code \cite{abbe2020reed} is denoted by $\mathrm{RM}(m,r)$, where $m$ is a positive integer that defines the blocklength of the code as $n=2^m$, $r$ is termed as the order of the code with $r\in\{0,1,\dots,m\}$, which determines the dimension of the code as $\sum_{i=0}^r{m \choose i}$. 
Note that $\mathrm{RM}(m,r)$ generates a $(2^m,\sum_{i=0}^r{m \choose i})$-code. The resulting generator matrix $\mathbf{G}$ is constructed by 
\begin{equation}
\label{eq:gmatrix}
    \mathbf{G}=
    \begin{bmatrix}
    \mathbf{G}_r\\
    \vdots\\ 
    \mathbf{G}_{1}\\ 
    \mathbf{G}_0
    \end{bmatrix}_{\sum_{i=0}^r{m \choose i}\times 2^m},
\end{equation}
where $\mathbf{G}_0$ is an all-one vector with $n$ entries, $\mathbf{G}_1$ is a matrix of dimension $m\times n$ that each column vector is a unique $m$-bit binary vector, and $\mathbf{G}_j$ is an ${m \choose j}\times n$ matrix that each row is constructed by an entry-wise product of a distinct set of $j$ rows from $\mathbf{G}_1$, for $j\in\{2,3,\dots,r\}$. Thus, $\mathbf{G}$ has $\sum_{i=0}^r{m \choose i}$ rows, with a minimum distance of $2^{m-r}$. 

Note also that the last row of each of the  $\mathbf{G}_i$'s, for $i\in\{1,2,\dots,r\}$, are in the form of $[\mathbf{1}_{2^{m-i}},\mathbf{0}_{n-2^{m-i}}]$. Such vectors, consisting of two separate sub-blocks of all-ones and all-zeros, play a critical role in the analysis of the probability of success for the substitution attacks. 
This is an advantage of employing RM-A-codes, where the set of such codewords is exactly known and can be characterized.

\section{The Proposed RM-A-codes}
\label{sec:protocol}
First, recall a systematic authentication code, defined in Section~\ref{sec:prelim}-A, as a four-tuple $(\mathcal{S},\mathcal{T},\mathcal{K},\{\mathcal{E}_\mathbf{k}:\mathbf{k}\in\mathcal{K}\})$. We define the proposed projective construction based on RM codes, referred to as RM-A-codes, as follows:

\begin{definition}[RM-A-codes]
Consider the source $\mathbf{s}\in\mathcal{S}=\{0,1\}^M$, tag $\mathbf{t}\in\mathcal{T}=\{0,1\}^l$, and let the
key $\mathbf{k}\in\mathcal{K}$ to be a concatenation of two keys $\mathbf{k}_1\in\mathcal{K}_1\subset\{0,1\}^n$ and $\mathbf{k}_2\in\mathcal{K}_2=\{0,1\}^l$, such that $\mathbf{k}=[\mathbf{k}_1,\mathbf{k}_2]\in\mathcal{K}=\mathcal{K}_1\times\mathcal{K}_2$, for some positive integers $n,M,l$ with $n> M \geq l$. More details on $\mathcal{K}_1$ are discussed later in Remark~\ref{rmk:k_1}.
The tag $\mathbf{t}$ is defined as $\mathbf{t}=\mathcal{E}_{\mathbf{k}}(\mathbf{s})=\mathbf{c}_{\mathbf{s},\mathbf{k}_1}+\mathbf{k}_2$, where $\mathbf{c}_{\mathbf{s},\mathbf{k}_1}$ is a vector selected as a subset of $l$ entries from the entries of the codeword $\mathbf{c}_{\mathbf{s}}$ with the indices determined by $\mathbf{k}_1$, $\mathbf{c}_{\mathbf{s}}$ is a codeword encoded from the source $\mathbf{s}\in\{0,1\}^M$ as a subset of the information input $\mathbf{u}\in\{0,1\}^{\scriptscriptstyle \sum_{i=0}^r{m \choose i}}$, such that $\mathbf{u}=[\mathbf{0}_{\scriptscriptstyle \sum_{i=0}^r{m \choose i}-M-1},\mathbf{s},0]$, for a $(2^m,\sum_{i=0}^r{m \choose i})$-code from $\mathrm{RM}(m,r)$ with a generator matrix $\mathbf{G}$, where $M<\sum_{i=0}^r{m \choose i}$, such that $\mathbf{c_s=uG}$.
\end{definition}

\begin{remark}
\label{rmk:k_1}
    The key $\mathbf{k}_1$ is a length-$n$ binary vector with a weight of $l$, i.e., containing $l$ ones, where the indices of ones indicate the indices of the selected $l$ entries from $\mathbf{c}_{\mathbf{s}}$ to construct $\mathbf{c}_{\mathbf{s},\mathbf{k}_1}$. 
    Thus, given a systematic authentication code, we have $|\mathcal{K}_1|={n \choose l}$. Note that we may reduce the length of $\mathbf{k}_1$ to $\lceil\log_2{|\mathcal{K}_1|}\rceil=\lceil\log_2{{n \choose l}}\rceil$.
\end{remark}

To generate $\mathbf{c}_{\mathbf{s}}$ given $\mathbf{s}$ and $\mathbf{G}$ from $\mathrm{RM}(m,r)$ in \eqref{eq:gmatrix}, one needs to specify $\mathbf{u}$. 
To this end, we start by presenting the following lemmas.
\begin{lemma}
\label{lemma:u_complement}
Given a generator matrix $\mathbf{G}$ from $\mathrm{RM}(m,r)$, we have $\mathbf{c}_\mathbf{s}+\mathbf{c}_{\mathbf{s}^\prime}=\mathbf{1}_n$ when $\mathbf{u=s}$ and $\mathbf{u}^\prime=\mathbf{s}^\prime$, where $\mathbf{c}_\mathbf{s}=\mathbf{uG}$ and $\mathbf{c}_{\mathbf{s}^\prime}=\mathbf{u}^\prime\mathbf{G}$, if $\mathbf{s}$ and $\mathbf{s}^\prime$ are binary vectors that differ only in the last entry (referred to as neighboring vectors).
\end{lemma}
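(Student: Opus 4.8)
The plan is to prove the identity by a direct appeal to the linearity of $\mathrm{RM}(m,r)$ over $\mathbb{F}_2$, combined with the explicit block structure of the generator matrix $\mathbf{G}$ in \eqref{eq:gmatrix}. Adopting the lemma's convention that the information vector equals the source (so that $\mathbf{c}_\mathbf{s}=\mathbf{s}\mathbf{G}$ and $\mathbf{c}_{\mathbf{s}^\prime}=\mathbf{s}^\prime\mathbf{G}$), I would first use the fact that both codewords arise from multiplying by the \emph{same} matrix $\mathbf{G}$ to write $\mathbf{c}_\mathbf{s}+\mathbf{c}_{\mathbf{s}^\prime}=(\mathbf{s}+\mathbf{s}^\prime)\mathbf{G}$, where all arithmetic is modulo $2$.

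The next step is to exploit the neighboring assumption. Since $\mathbf{s}$ and $\mathbf{s}^\prime$ agree in every coordinate except the last, their modulo-$2$ sum $\mathbf{s}+\mathbf{s}^\prime$ is the standard basis vector whose only nonzero entry sits in the last position. Multiplying such a vector by $\mathbf{G}$ therefore simply extracts the last row of $\mathbf{G}$.

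Finally, I would identify that last row explicitly. By the construction in \eqref{eq:gmatrix}, the rows of $\mathbf{G}$ are stacked in the order $\mathbf{G}_r,\dots,\mathbf{G}_1,\mathbf{G}_0$, and the bottom block $\mathbf{G}_0$ is the all-ones vector $\mathbf{1}_n$ with $n=2^m$ entries. Hence the row selected in the previous step is exactly $\mathbf{1}_n$, yielding $\mathbf{c}_\mathbf{s}+\mathbf{c}_{\mathbf{s}^\prime}=\mathbf{1}_n$, as claimed. In words, flipping the information bit that multiplies the all-ones row sends a codeword to its bitwise complement.

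I do not anticipate a genuine mathematical obstacle, since the statement is an immediate consequence of linearity once the structure of $\mathbf{G}$ is recalled. The only point requiring care is a bookkeeping check of the indexing convention: one must confirm that the coordinate in which the neighboring vectors differ, namely the last entry, is precisely the coordinate multiplying the all-ones row $\mathbf{G}_0$, rather than some intermediate row of $\mathbf{G}$. This alignment is guaranteed by the ordering of the blocks in \eqref{eq:gmatrix}, so verifying it amounts to tracing the row order rather than performing any nontrivial computation.
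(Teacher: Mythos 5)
Your proposal is correct and follows essentially the same route as the paper's proof: both use linearity to reduce the sum to $(\mathbf{s}+\mathbf{s}^\prime)\mathbf{G}$, observe that the neighboring assumption makes this a selection of the last row of $\mathbf{G}$, and identify that row as the all-ones block $\mathbf{G}_0$ from \eqref{eq:gmatrix}. Your added remark about verifying the indexing alignment is a sensible bookkeeping check, but it does not change the argument.
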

\begin{proof}
Let $\mathbf{u}=[u_1,u_2,\dots,u_{\scriptscriptstyle \sum_{i=0}^r{m \choose i}-1},0]$ and $\mathbf{u}^\prime=[u_1,u_2,\dots,u_{\scriptscriptstyle \sum_{i=0}^r{m \choose i}-1},1]$ to be a pair of binary neighboring vectors. Then we have
    \begin{equation}
    \label{eq:complement_u}
        \mathbf{c}_\mathbf{s}+\mathbf{c}_{\mathbf{s}^\prime}=\mathbf{uG}+\mathbf{u}^\prime\mathbf{G}=(\mathbf{u}+\mathbf{u}^\prime)\mathbf{G}=[\mathbf{0}_{\scriptscriptstyle \sum_{i=0}^r{m \choose i}-1},1]\mathbf{G}.
    \end{equation}
Equivalently, $\mathbf{c}_\mathbf{s}+\mathbf{c}_{\mathbf{s}^\prime}$ is equal to the last row in $\mathbf{G}$, which is the all-one vector according to \eqref{eq:gmatrix}. 
\end{proof}

The next lemma demonstrates that using plain RM codes results in the probability of success for the substitution attack being one. Hence, we will modify the structure by considering sub-codes of RM codes, i.e., by letting $\mathbf{s}$ be a sub-vector of $\mathbf{u}$ while the remaining entries of $\mathbf{u}$ are fixed to zeros. This will be clarified later. 

\begin{lemma}
\label{lemma:Ps_1}
    For any codeword $\mathbf{c}_\mathbf{s}$, the adversary can pick $\mathbf{c}_{\mathbf{s}^\prime}$, where $\mathbf{s},\mathbf{s}^\prime\in\mathcal{S}$, with $\mathbf{c}_\mathbf{s}+\mathbf{c}_{\mathbf{s}^\prime}=\mathbf{1}_n$, resulting in  $P_\mathrm{S}=1$.
\end{lemma}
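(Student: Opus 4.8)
The plan is to exhibit an explicit substitution strategy that succeeds for \emph{every} key consistent with the observed pair, thereby forcing the conditional probability in \eqref{eq:P_S_payoff} to equal $1$. Given an observed message $[\mathbf{s},\mathbf{t}]$, I would first invoke Lemma~\ref{lemma:u_complement} to choose $\mathbf{s}^\prime$ as the neighbor of $\mathbf{s}$, i.e., the source differing from $\mathbf{s}$ only in the last entry (so that $\mathbf{s}^\prime\neq\mathbf{s}$, as required, and $\mathbf{s}^\prime\in\mathcal{S}$). By that lemma the associated codewords are complementary, $\mathbf{c}_\mathbf{s}+\mathbf{c}_{\mathbf{s}^\prime}=\mathbf{1}_n$, which is exactly the hypothesis of the present statement.

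The key step is to show that this complementary relation descends from the full length-$n$ codewords to the length-$l$ projected vectors, uniformly over all admissible keys. By Remark~\ref{rmk:k_1}, every $\mathbf{k}_1\in\mathcal{K}_1$ selects exactly $l$ coordinates, and projecting the all-ones vector $\mathbf{1}_n$ onto any $l$ coordinates yields $\mathbf{1}_l$. Hence $\mathbf{c}_{\mathbf{s},\mathbf{k}_1}+\mathbf{c}_{\mathbf{s}^\prime,\mathbf{k}_1}=\mathbf{1}_l$ for every $\mathbf{k}_1$, independent of which subset is chosen.

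Next I would set the forged tag to $\mathbf{t}^\prime=\mathbf{t}+\mathbf{1}_l$. Using $\mathbf{t}=\mathbf{c}_{\mathbf{s},\mathbf{k}_1}+\mathbf{k}_2$ together with $\mathcal{E}_\mathbf{k}(\mathbf{s}^\prime)=\mathbf{c}_{\mathbf{s}^\prime,\mathbf{k}_1}+\mathbf{k}_2$, adding and applying the projected complement identity gives $\mathcal{E}_\mathbf{k}(\mathbf{s}^\prime)=\mathbf{t}+\mathbf{1}_l=\mathbf{t}^\prime$, with the masking key $\mathbf{k}_2$ canceling. Crucially this holds for \emph{every} key $\mathbf{k}=[\mathbf{k}_1,\mathbf{k}_2]$ consistent with $\mathbf{t}=\mathcal{E}_\mathbf{k}(\mathbf{s})$, so each such key also validates $[\mathbf{s}^\prime,\mathbf{t}^\prime]$. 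Therefore the numerator and denominator in \eqref{eq:P_S_payoff} coincide, the conditional probability equals $1$, and maximizing over observations yields $P_\mathrm{S}=1$.

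The main obstacle, such as it is, lies in the second step: justifying that the length-$n$ complement identity survives projection regardless of the key choice. This is where the weight-$l$ constraint on $\mathbf{k}_1$ is essential, since it guarantees the projection of $\mathbf{1}_n$ is always exactly $\mathbf{1}_l$, making the cancellation of $\mathbf{k}_2$ key-independent. The remaining steps are routine substitutions into the definition of $P_\mathrm{S}$.
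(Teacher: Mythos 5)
Your proposal is correct and follows essentially the same route as the paper's own proof: project the complementary pair to get $\mathbf{c}_{\mathbf{s},\mathbf{k}_1}+\mathbf{c}_{\mathbf{s}^\prime,\mathbf{k}_1}=\mathbf{1}_l$, observe that $\mathbf{k}_2$ cancels so $\mathbf{t}^\prime=\mathbf{t}+\mathbf{1}_l$ is valid under every key consistent with the observation, and conclude that the numerator and denominator of \eqref{eq:P_S_payoff} coincide. Your explicit justification that the weight-$l$ constraint on $\mathbf{k}_1$ makes the projection of $\mathbf{1}_n$ always equal $\mathbf{1}_l$ is a slightly more careful spelling-out of a step the paper asserts directly, but the argument is the same.
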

\begin{proof}
    Given $\mathbf{k}_1$, when $\mathbf{c}_\mathbf{s}+\mathbf{c}_{\mathbf{s}^\prime}=\mathbf{1}_n$, we have
    \begin{equation}
    \label{eq:thm_sum_1}\mathbf{c}_{\mathbf{s},\mathbf{k}_1}+\mathbf{c}_{\mathbf{s}^\prime,\mathbf{k}_1}=\mathbf{1}_l.
    \end{equation}
    Note that, 
    \begin{equation}
        \label{eq:thm_sum_2}\mathbf{c}_{\mathbf{s},\mathbf{k}_1}+\mathbf{c}_{\mathbf{s}^\prime,\mathbf{k}_1}=
        (\mathbf{c}_{\mathbf{s},\mathbf{k}_1}+\mathbf{k}_2)+(\mathbf{c}_{\mathbf{s}^\prime,\mathbf{k}_1}+\mathbf{k}_2)=\mathbf{t}+\mathbf{t}^\prime.
    \end{equation}
    Combining \eqref{eq:thm_sum_1} and \eqref{eq:thm_sum_2}, we obtain that $\mathbf{t}^\prime=\mathbf{t}+\mathbf{1}_l$, when $\mathbf{c}_{\mathbf{s},\mathbf{k}_1}+\mathbf{c}_{\mathbf{s}^\prime,\mathbf{k}_1}=\mathbf{1}_l$. In order to compute $P_\mathrm{S}$ defined in \eqref{eq:P_S_worst}, first we derive that
    \begin{equation}
    \begin{aligned}
        \mathbb{P}([\mathbf{s}^\prime&,\mathbf{t}^\prime] \textrm{\,valid}\,|\,[\mathbf{s},\mathbf{t}] \textrm{\,observed})\\&=
\frac{\left|\left\{\mathbf{k}\in\mathcal{K}:
    \mathbf{t}=\mathcal{E}_\mathbf{k}(\mathbf{s}),
    \mathbf{t}^\prime=\mathcal{E}_\mathbf{k}(\mathbf{s}^\prime)\right\}\right|}
{|\{\mathbf{k}\in\mathcal{K}:\mathbf{t}=\mathcal{E}_\mathbf{k}(\mathbf{s})\}|}\\
&=\frac{\left|\left\{\mathbf{k}\in\mathcal{K}:
    \mathbf{t}=\mathcal{E}_\mathbf{k}(\mathbf{s}),
    \mathbf{t}^\prime+
    \mathbf{1}_l=\mathcal{E}_\mathbf{k}(\mathbf{s}^\prime)+
    \mathbf{1}_l\right\}\right|}
{|\{\mathbf{k}\in\mathcal{K}:\mathbf{t}=\mathcal{E}_\mathbf{k}(\mathbf{s})\}|}\\
\overset{\scriptstyle\mathrm{(a)}}&{=}\frac{\left|\left\{\mathbf{k}\in\mathcal{K}:
    \mathbf{t}=\mathcal{E}_\mathbf{k}(\mathbf{s}),
    \mathbf{t}=\mathcal{E}_\mathbf{k}(\mathbf{s})\right\}\right|}
{|\{\mathbf{k}\in\mathcal{K}:\mathbf{t}=\mathcal{E}_\mathbf{k}(\mathbf{s})\}|}\\
&=\frac{\left|\left\{\mathbf{k}\in\mathcal{K}:
    \mathbf{t}=\mathcal{E}_\mathbf{k}(\mathbf{s})\right\}\right|}
{|\{\mathbf{k}\in\mathcal{K}:\mathbf{t}=\mathcal{E}_\mathbf{k}(\mathbf{s})\}|}=1,
    \end{aligned}
    \end{equation}
    where $\mathrm{(a)}$ is according to $\mathbf{t}^\prime=\mathbf{t}+\mathbf{1}_l$, when $\mathbf{c}_{\mathbf{s},\mathbf{k}_1}+\mathbf{c}_{\mathbf{s}^\prime,\mathbf{k}_1}=\mathbf{1}_l$.
    Therefore, with $\mathbb{P}([\mathbf{s}^\prime,\mathbf{t}^\prime] \textrm{\,valid}\,|\,[\mathbf{s},\mathbf{t}]\textrm{\,observed})=1$, we have $P_\mathrm{S}=1$.
\end{proof}

With Lemmas~\ref{lemma:u_complement} and~\ref{lemma:Ps_1}, we have the following proposition.

\begin{proposition}
\label{prop:u}
    RM-A-codes ensure $P_\mathrm{S}<1$ if we set the last entry in $\mathbf{u}$ frozen to be $0$.
\end{proposition}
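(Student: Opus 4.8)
The plan is to show that freezing the last coordinate of $\mathbf{u}$ to $0$ removes the unique structural situation in which the conditional success probability in \eqref{eq:P_S_worst} can reach $1$. First I would reduce the problem to a statement about the difference vector $\mathbf{d}:=\mathbf{c}_\mathbf{s}+\mathbf{c}_{\mathbf{s}^\prime}$. Fix any observed pair $[\mathbf{s},\mathbf{t}]$ and any $\mathbf{s}^\prime\neq\mathbf{s}$. For each $\mathbf{k}_1\in\mathcal{K}_1$ there is exactly one $\mathbf{k}_2$ consistent with $\mathbf{t}=\mathcal{E}_\mathbf{k}(\mathbf{s})$, namely $\mathbf{k}_2=\mathbf{t}+\mathbf{c}_{\mathbf{s},\mathbf{k}_1}$, so the denominator of \eqref{eq:P_S_payoff} equals $|\mathcal{K}_1|$. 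For such a key, $\mathbf{t}^\prime=\mathcal{E}_\mathbf{k}(\mathbf{s}^\prime)$ satisfies $\mathbf{t}+\mathbf{t}^\prime=\mathbf{c}_{\mathbf{s},\mathbf{k}_1}+\mathbf{c}_{\mathbf{s}^\prime,\mathbf{k}_1}=\mathbf{d}_{\mathbf{k}_1}$, the restriction of $\mathbf{d}$ to the $l$ coordinates indexed by $\mathbf{k}_1$. Hence the conditional probability equals $1$ if and only if $\mathbf{d}_{\mathbf{k}_1}$ takes one and the same value as $\mathbf{k}_1$ ranges over all of $\mathcal{K}_1$.

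The combinatorial core is then to prove that, because $\mathcal{K}_1$ consists of all $\binom{n}{l}$ weight-$l$ index sets (Remark~\ref{rmk:k_1}), requiring $\mathbf{d}_{\mathbf{k}_1}$ to be constant over every $l$-subset forces $\mathbf{d}$ itself to be constant, i.e. $\mathbf{d}\in\{\mathbf{0}_n,\mathbf{1}_n\}$. I would argue this by a rank-placement argument: for a coordinate $i$ and a target position $j\in\{1,\dots,l\}$ one can choose an $l$-subset in which $i$ is the $j$-th smallest index whenever $j\le i\le n-l+j$, which pins $d_i$ to the $j$-th entry of the common value; the overlapping admissible ranges of consecutive ranks then force all entries of $\mathbf{d}$ to coincide. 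This step uses only $n\ge l+1$, which holds since $n>M\ge l$. The case $\mathbf{d}=\mathbf{0}_n$ is impossible here because $\mathbf{G}$ has full row rank, so $\mathbf{c}_\mathbf{s}=\mathbf{c}_{\mathbf{s}^\prime}$ would give $\mathbf{s}=\mathbf{s}^\prime$, contradicting $\mathbf{s}^\prime\neq\mathbf{s}$. Thus the only way to achieve probability $1$ is $\mathbf{d}=\mathbf{1}_n$, precisely the complement-pair situation exploited in Lemma~\ref{lemma:Ps_1}.

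Finally I would close the remaining case using Lemma~\ref{lemma:u_complement} together with its converse. Writing $\mathbf{d}=(\mathbf{u}+\mathbf{u}^\prime)\mathbf{G}$ and noting that $\mathbf{1}_n$ is the last row of $\mathbf{G}$ in \eqref{eq:gmatrix}, full row rank of $\mathbf{G}$ gives that $\mathbf{d}=\mathbf{1}_n$ holds exactly when $\mathbf{u}+\mathbf{u}^\prime=[\mathbf{0},1]$, i.e. when $\mathbf{u}$ and $\mathbf{u}^\prime$ are neighboring vectors differing only in the last entry. Freezing that last entry to $0$ forces every admissible $\mathbf{u},\mathbf{u}^\prime$ to agree there, so no neighboring pair survives and $\mathbf{d}=\mathbf{1}_n$ becomes unattainable. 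Consequently every conditional probability in \eqref{eq:P_S_worst} is strictly below $1$; since the maximum is over finite sets of sources and tags it is attained, yielding $P_\mathrm{S}<1$.

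I expect the main obstacle to be the combinatorial core, namely proving rigorously that constancy of all $l$-coordinate projections forces $\mathbf{d}$ to be globally constant, and in particular handling the ordering of the selected coordinates within each projected sub-vector. The converse of Lemma~\ref{lemma:u_complement} (uniqueness of the representation of $\mathbf{1}_n$ via full row rank of $\mathbf{G}$) is the other point that must be stated carefully, since Lemma~\ref{lemma:u_complement} as written supplies only one direction.
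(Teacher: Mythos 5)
Your proposal is correct, and it in fact proves more than the paper's own proof does. The paper's argument for Proposition~\ref{prop:u} is three sentences: Lemmas~\ref{lemma:u_complement} and~\ref{lemma:Ps_1} show that a neighboring pair yields the complement codeword $\mathbf{c}_{\mathbf{s}}+\mathbf{c}_{\mathbf{s}'}=\mathbf{1}_n$ and hence $P_\mathrm{S}=1$, and then the decisive claim --- that apart from this complement pair no codeword together with a tag is valid for \emph{all} keys consistent with the observation --- is simply asserted, not proved. Your two nontrivial steps supply exactly the missing content. First, the reduction of the conditional success probability to constancy of the projections $\mathbf{d}_{\mathbf{k}_1}$ of the difference vector $\mathbf{d}=\mathbf{c}_\mathbf{s}+\mathbf{c}_{\mathbf{s}'}$ over all $\mathbf{k}_1\in\mathcal{K}_1$ is the same device the paper deploys only later, in steps $\mathrm{(b)}$--$\mathrm{(c)}$ of the proof of Theorem~\ref{thm:P_S_ind}, via the linearity lemma of the appendix; your proof anticipates it. Second, your rank-placement argument --- that for $j\le i\le n-l+j$ one can realize coordinate $i$ as the $j$-th smallest selected index, so the overlapping rank windows (overlap guaranteed by $n\ge l+1$, which follows from $n>M\ge l$ in Definition~1) force $\mathbf{d}$ to be globally constant, i.e.\ $\mathbf{d}\in\{\mathbf{0}_n,\mathbf{1}_n\}$ --- appears nowhere in the paper and is the genuine combinatorial core of the uniqueness assertion. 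Your closing steps are also right where the paper is silent: $\mathbf{d}=\mathbf{0}_n$ is excluded by full row rank of $\mathbf{G}$ (so $\mathbf{s}\neq\mathbf{s}'$ implies $\mathbf{c}_\mathbf{s}\neq\mathbf{c}_{\mathbf{s}'}$), and $\mathbf{d}=\mathbf{1}_n$ requires the \emph{converse} of Lemma~\ref{lemma:u_complement}, which, as you correctly flag, the lemma as stated does not give and which again follows from full row rank (uniqueness of the representation of the all-one row of $\mathbf{G}$). What the paper's brevity buys is economy; what your route buys is an actual proof of the claim that freezing the last entry of $\mathbf{u}$ is not only sufficient to remove the known bad pair but removes \emph{every} possible way of attaining conditional probability $1$, which is what $P_\mathrm{S}<1$ really needs.
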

\begin{proof}
    Lemma~\ref{lemma:u_complement} indicates that we will have $\mathbf{c}_\mathbf{s}+\mathbf{c}_{\mathbf{s}^\prime}=\mathbf{1}_n$ if we do not set the last entry of $\mathbf{u}$ to be $0$. Consequently, Lemma~\ref{lemma:Ps_1} implies that it will lead to an authentication code with $P_\mathrm{S}=1$, which is undesirable. Except $\mathbf{c}_{\mathbf{s}^\prime}$, there is no other codeword (together with a tag), that form a valid pair for all choices of the key. Hence, removing $\mathbf{c}_{\mathbf{s}^\prime}$ results in $P_\mathrm{S}<1$.  
    
\end{proof}

Proposition~\ref{prop:u} implies that we should set the last entry of $\mathbf{u}$ frozen to be $0$, since the core design criterion for constructing a good authentication code is to minimize the probabilities of deception and one should naturally avoid $P_\mathrm{S}=1$. 

As specified by Proposition~\ref{prop:u} with $M<\sum_{i=0}^r{m \choose i}$, we obtain $\mathbf{u}=[\mathbf{0}_{\scriptscriptstyle \sum_{i=0}^r{m \choose i}-M-1},\mathbf{s},0]$
such that $\mathbf{c}_\mathbf{s}=\mathbf{uG}=[\mathbf{0}_{\scriptscriptstyle \sum_{i=0}^r{m \choose i}-M-1},\mathbf{s},0]\mathbf{G}$.
To construct an RM-A-code as defined in Definition 1, we need to determine 
the length of the source $M$, the length of the tag $l$, and $\mathrm{RM}(m,r)$ to generate a $(2^m,\sum_{i=0}^r{m \choose i})$-code which has a blocklength $n=2^m$. 
We end this section by illustrating a toy example of RM-A-codes.

\begin{example}
\label{exp:toy_construct}
    Let $M=2$, $l=1$, and choose $m=2,r=1$ to generate a $(4,3)$-code by $\mathrm{RM}(2,1)$. In such code, we obtain the generator matrix as
\begin{equation}
    \mathbf{G}=\begin{bmatrix}
        1 & 0 & 1 & 0\\
        1 & 1 & 0 & 0\\
        1 & 1 & 1 & 1
    \end{bmatrix}.
\end{equation}
The information input $\mathbf{u}$ is a length-$3$ vector as $\mathbf{u}=[\mathbf{s},0]=[s_1,s_2,0]$,
where $\mathbf{s}=[s_1,s_2]\in\mathcal{S}=\{[0,0],[0,1],[1,0],[1,1]\}$.
We obtain the codeword as $\mathbf{c}_\mathbf{s}
=\mathbf{uG}=[s_1+s_2,s_2,s_1,0]$.
The keys $\mathbf{k}_1$ and $\mathbf{k}_2$ are constructed as $\mathbf{k}_1\in\mathcal{K}_1=\{[1,0,0,0],[0,1,0,0],[0,0,1,0],[0,0,0,1]\}$ and $\mathbf{k}_2\in\mathcal{K}_2=\{[0],[1]\}$.
Then, the tags can be generated from the given key, recall that $\mathbf{t}=\mathbf{c}_{\mathbf{s},\mathbf{k}_1}+\mathbf{k}_2$. Then, for instance, given $\mathbf{k}_1=[1,0,0,0]$ and $\mathbf{k}_2=[0]$, we have $\mathbf{t}=\mathbf{c}_{\mathbf{s},[1,0,0,0]}+[0]=[s_1+s_2]$,
\textcolor{black}{where $\mathbf{c}_{\mathbf{s},[1,0,0,0]}$ denotes selecting the first entry in the codeword $\mathbf{c}_{\mathbf{s}}$.} By calculating the tags $\mathbf{t}$'s over all sources in $\mathcal{S}$ and all keys $\mathcal{K}=\mathcal{K}_1\times\mathcal{K}_2$, we have the authentication matrix shown in TABLE~\ref{tab:eg_A-matrix}. 
Applying the definitions of the probabilities of deception yields $ P_\mathrm{I}=0.5$, $P_\mathrm{S}=0.5$.

\begin{table}[t]  
\normalsize
\centering
\begin{tabular}{|c || c | c | c | c|}  
 \hline
 & $\mathbf{c}_{[0,0]}$ & $\mathbf{c}_{[0,1]}$ & $\mathbf{c}_{[1,0]}$ & $\mathbf{c}_{[1,1]}$ \\
 \hline\hline
 $\mathbf{c}_{\mathbf{s},[1,0,0,0]}+[0]$ & $0$ & $1$ & $1$ & $0$ \\ 
 \hline
 $\mathbf{c}_{\mathbf{s},[1,0,0,0]}+[1]$ & $1$ & $0$ & $0$ & $1$ \\
 \hline
 $\mathbf{c}_{\mathbf{s},[0,1,0,0]}+[0]$ & $0$ & $0$ & $1$ & $1$ \\
 \hline
 $\mathbf{c}_{\mathbf{s},[0,1,0,0]}+[1]$ & $1$ & $1$ & $0$ & $0$ \\
 \hline
 $\mathbf{c}_{\mathbf{s},[0,0,1,0]}+[0]$ & $0$ & $1$ & $0$ & $1$ \\ 
 \hline
 $\mathbf{c}_{\mathbf{s},[0,0,1,0]}+[1]$ & $1$ & $0$ & $1$ & $0$ \\
 \hline
 $\mathbf{c}_{\mathbf{s},[0,0,0,1]}+[0]$ & $0$ & $0$ & $0$ & $0$ \\
 \hline
 $\mathbf{c}_{\mathbf{s},[0,0,0,1]}+[1]$ & $1$ & $1$ & $1$ & $1$ \\ 
  \hline
\end{tabular}
\caption{Authentication matrix for the toy example}
  \label{tab:eg_A-matrix}
\end{table}
\end{example}

This example shows a simple case for our construction. 
In the next section, we analyze theoretical closed-form expressions for $P_\mathrm{I}$ and  $P_{\mathrm{S}}$.

\section{Analysis}
\label{sec:ana}

\subsection{The probability of success of the impersonation attack $P_\mathrm{I}$}

The following theorem shows that the proposed scheme has the lowest possible probability of success of the impersonation attack, i.e., the adversary cannot do better than a random assignment of the tag. 

\begin{theorem}
\label{thm:pi}
Given an RM-A-code as defined in Definition 1, we have
    \begin{equation}
    P_\mathrm{I}=\max_{\mathbf{s}^\prime,\mathbf{t}^\prime}\,\frac{|\{\mathbf{k}\in\mathcal{K}:\mathbf{t}^\prime=\mathcal{E}_\mathbf{k}(\mathbf{s}^\prime)=\mathbf{c}_{\mathbf{s}^\prime,\mathbf{k}_1}+\mathbf{k}_2\}|}{|\{\mathbf{k}\in\mathcal{K}\}|}=\frac{1}{2^l}.
    \end{equation}
\end{theorem}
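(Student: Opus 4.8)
The plan is to show that the numerator count in the definition of $P_\mathrm{I}$ does not depend on the adversary's choice of the forged pair $(\mathbf{s}^\prime,\mathbf{t}^\prime)$, so that the outer maximization is vacuous and the ratio collapses to a single constant. First I would record the size of the key space: since $\mathbf{k}=[\mathbf{k}_1,\mathbf{k}_2]$ ranges over $\mathcal{K}=\mathcal{K}_1\times\mathcal{K}_2$ with $|\mathcal{K}_1|=\binom{n}{l}$ by Remark~\ref{rmk:k_1} and $|\mathcal{K}_2|=|\{0,1\}^l|=2^l$, we have $|\mathcal{K}|=\binom{n}{l}\,2^l$.

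Next I would fix an arbitrary forged pair $(\mathbf{s}^\prime,\mathbf{t}^\prime)$ and count the keys satisfying $\mathbf{t}^\prime=\mathcal{E}_\mathbf{k}(\mathbf{s}^\prime)=\mathbf{c}_{\mathbf{s}^\prime,\mathbf{k}_1}+\mathbf{k}_2$. The crucial observation is that the second sub-key $\mathbf{k}_2$ acts as a one-time pad on the tag: for every fixed $\mathbf{k}_1\in\mathcal{K}_1$ the projection $\mathbf{c}_{\mathbf{s}^\prime,\mathbf{k}_1}$ is a determined vector in $\{0,1\}^l$ (well-defined since each weight-$l$ key $\mathbf{k}_1$ selects exactly $l$ coordinates of $\mathbf{c}_{\mathbf{s}^\prime}$), and over $\mathbb{F}_2$ the equation $\mathbf{t}^\prime=\mathbf{c}_{\mathbf{s}^\prime,\mathbf{k}_1}+\mathbf{k}_2$ has the unique solution $\mathbf{k}_2=\mathbf{t}^\prime+\mathbf{c}_{\mathbf{s}^\prime,\mathbf{k}_1}\in\{0,1\}^l=\mathcal{K}_2$. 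Hence each choice of $\mathbf{k}_1$ contributes exactly one valid key, so the numerator equals $|\mathcal{K}_1|=\binom{n}{l}$, independently of $\mathbf{s}^\prime$ and $\mathbf{t}^\prime$.

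Dividing, every pair $(\mathbf{s}^\prime,\mathbf{t}^\prime)$ yields $\binom{n}{l}/\bigl(\binom{n}{l}\,2^l\bigr)=1/2^l$, and therefore the maximum over all $(\mathbf{s}^\prime,\mathbf{t}^\prime)$ is also $1/2^l$. There is essentially no obstacle in this argument; it rests entirely on the fact that $\mathcal{K}_2$ is the \emph{full} space $\{0,1\}^l$, which makes masking by $\mathbf{k}_2$ a bijection on the tag space for each fixed $\mathbf{k}_1$. The only point requiring care is the uniformity of the per-$\mathbf{k}_1$ count, i.e.\ verifying that exactly one $\mathbf{k}_2$ works for \emph{each} $\mathbf{k}_1$ rather than for some; this follows immediately from the invertibility of addition over $\mathbb{F}_2$ noted above. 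I would close by observing that this value meets the universal lower bound $P_\mathrm{I}\geq 1/|\mathcal{T}|=1/2^l$ from \cite{stinson1992combinatorial}, so the scheme attains the optimal impersonation security and the adversary can do no better than guessing the tag at random.
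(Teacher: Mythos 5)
Your proof is correct and follows essentially the same route as the paper: compute $|\mathcal{K}|=\binom{n}{l}\,2^l$, then observe that for each fixed $\mathbf{k}_1$ exactly one $\mathbf{k}_2=\mathbf{t}^\prime+\mathbf{c}_{\mathbf{s}^\prime,\mathbf{k}_1}$ validates the pair, so the numerator is $|\mathcal{K}_1|=\binom{n}{l}$ regardless of $(\mathbf{s}^\prime,\mathbf{t}^\prime)$. In fact you spell out the one-time-pad bijection that the paper compresses into the remark ``with $\mathcal{T}=\mathcal{K}_2$,'' which is a welcome extra level of detail rather than a deviation.
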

\begin{proof}
    Since $|\mathcal{K}_1|={n \choose l}$ and $|\mathcal{K}_2|=2^l$, we have 
    \begin{equation}
    \label{eq:P_I_Nom}
        |\{\mathbf{k}\in\mathcal{K}\}|=|\mathcal{K}|=|\mathcal{K}_1\times\mathcal{K}_2|={n \choose l}\cdot 2^l.
    \end{equation}
    Furthermore, with $\mathcal{T=K}_2$, we have 
    \begin{equation}
        \label{eq:P_I_Denom}
        |\{\mathbf{k}\in\mathcal{K}:\mathbf{t}^\prime=\mathcal{E}_\mathbf{k}(\mathbf{s}^\prime)=\mathbf{c}_{\mathbf{s}^\prime,\mathbf{k}_1}+\mathbf{k}_2\}|=|\mathcal{K}_1|={n \choose l}.
    \end{equation}
    Thus, with \eqref{eq:P_I_Nom} and \eqref{eq:P_I_Denom}, we have $P_\mathrm{I}=\frac{1}{2^l}$.
\end{proof}

\subsection{Analysis for $P_{\mathrm{S}}$: The probability of success of the substitution attack}





The following theorem presents a simplified form for the calculation of $P_{\mathrm{S}}$, enabling a more efficient method for calculating and characterizing the probability of success of the substitution attack, with the assistance of a straightforward linearity property of linear codes which turns out to be useful in the analysis of the proposed RM-A-codes in Appendix~\ref{app:linearity}. 

\begin{theorem}
\label{thm:P_S_ind}
The quantity $P_{\mathrm{S}}$, defined in \eqref{eq:P_S_worst}, can be equivalently computed as 
\begin{equation}
\label{eq:P_S_w_thm_simple}
\begin{aligned}
P_{\mathrm{S}}=\max_{\Tilde{\mathbf{s}}\neq \mathbf{0}_M}\max_{\Tilde{\mathbf{t}}} \frac{|\{{\mathbf{k}_1\in\mathcal{K}_1:\mathbf{c}_{\Tilde{\mathbf{s}},\mathbf{k}_1}=\Tilde{\mathbf{t}}\}|}}{{n \choose l}},
\end{aligned}
\end{equation}
where $\Tilde{\mathbf{s}}\in\mathcal{S}$, excluding the all-zero source vector, and $\Tilde{\mathbf{t}}\in\mathcal{T}$.
\end{theorem}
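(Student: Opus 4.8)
The plan is to exploit the linearity of the encoding rule to collapse the two-sided conditional probability in \eqref{eq:P_S_payoff} into a single constraint on the \emph{difference} vectors. First I would record the key linearity fact (the property deferred to Appendix~\ref{app:linearity}): since $\mathbf{u}=[\mathbf{0}_{\scriptscriptstyle \sum_{i=0}^r{m \choose i}-M-1},\mathbf{s},0]$ depends linearly on $\mathbf{s}$ and $\mathbf{c}_\mathbf{s}=\mathbf{uG}$, for every fixed $\mathbf{k}_1$ the projected vector obeys $\mathbf{c}_{\mathbf{s},\mathbf{k}_1}+\mathbf{c}_{\mathbf{s}^\prime,\mathbf{k}_1}=\mathbf{c}_{\mathbf{s}+\mathbf{s}^\prime,\mathbf{k}_1}$. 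This single identity drives the whole argument.

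Next I would evaluate the numerator and denominator of \eqref{eq:P_S_payoff} separately. For the denominator, observe that for each of the ${n \choose l}$ choices of $\mathbf{k}_1\in\mathcal{K}_1$ the requirement $\mathbf{t}=\mathbf{c}_{\mathbf{s},\mathbf{k}_1}+\mathbf{k}_2$ determines $\mathbf{k}_2=\mathbf{t}+\mathbf{c}_{\mathbf{s},\mathbf{k}_1}$ uniquely; hence the denominator equals ${n \choose l}$ for every $(\mathbf{s},\mathbf{t})$, and in particular the conditioning event always has positive probability. For the numerator I impose both $\mathbf{t}=\mathbf{c}_{\mathbf{s},\mathbf{k}_1}+\mathbf{k}_2$ and $\mathbf{t}^\prime=\mathbf{c}_{\mathbf{s}^\prime,\mathbf{k}_1}+\mathbf{k}_2$; summing them eliminates $\mathbf{k}_2$ and, via the linearity identity, leaves the lone condition $\mathbf{c}_{\mathbf{s}+\mathbf{s}^\prime,\mathbf{k}_1}=\mathbf{t}+\mathbf{t}^\prime$. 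Writing $\tilde{\mathbf{s}}=\mathbf{s}+\mathbf{s}^\prime$ and $\tilde{\mathbf{t}}=\mathbf{t}+\mathbf{t}^\prime$, the numerator therefore counts exactly the $\mathbf{k}_1$ with $\mathbf{c}_{\tilde{\mathbf{s}},\mathbf{k}_1}=\tilde{\mathbf{t}}$ (each paired with its forced $\mathbf{k}_2$), so that $\mathbb{P}([\mathbf{s}^\prime,\mathbf{t}^\prime]\,\textrm{valid}\,|\,[\mathbf{s},\mathbf{t}]\,\textrm{observed})=|\{\mathbf{k}_1\in\mathcal{K}_1:\mathbf{c}_{\tilde{\mathbf{s}},\mathbf{k}_1}=\tilde{\mathbf{t}}\}|/{n \choose l}$.

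Finally I would reconcile the two maximizations in \eqref{eq:P_S_worst}. The ratio just obtained depends on $(\mathbf{s},\mathbf{t},\mathbf{s}^\prime,\mathbf{t}^\prime)$ only through the pair $(\tilde{\mathbf{s}},\tilde{\mathbf{t}})$, so the nested maximum should reduce to a maximum over difference pairs. The point to verify carefully is the correspondence between the two index sets: because $\mathcal{S}=\{0,1\}^M$ and $\mathcal{T}=\{0,1\}^l$ are groups under bitwise addition, every $(\tilde{\mathbf{s}},\tilde{\mathbf{t}})$ with $\tilde{\mathbf{s}}\neq\mathbf{0}_M$ is realized by some admissible quadruple (fix any $\mathbf{s},\mathbf{t}$ and set $\mathbf{s}^\prime=\mathbf{s}+\tilde{\mathbf{s}}$, $\mathbf{t}^\prime=\mathbf{t}+\tilde{\mathbf{t}}$), while the constraint $\mathbf{s}^\prime\neq\mathbf{s}$ corresponds precisely to $\tilde{\mathbf{s}}\neq\mathbf{0}_M$. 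Substituting gives \eqref{eq:P_S_w_thm_simple}.

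The main obstacle I anticipate is not any single computation but the bookkeeping in this reparametrization: I must confirm both that every admissible difference pair $(\tilde{\mathbf{s}}\neq\mathbf{0}_M,\tilde{\mathbf{t}})$ is attainable, so that no candidate is lost in passing to the reduced maximization, and that every original quadruple induces an admissible difference pair, so that the reduced problem does not overcount; together with the observation that the conditioning denominator never vanishes, these ensure the conditional probability is well-defined throughout and that the two maxima coincide.
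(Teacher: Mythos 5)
Your proposal is correct and follows essentially the same route as the paper's proof: the same appeal to the linearity identity $\mathbf{c}_{\mathbf{s},\mathbf{k}_1}+\mathbf{c}_{\mathbf{s}^\prime,\mathbf{k}_1}=\mathbf{c}_{\mathbf{s}+\mathbf{s}^\prime,\mathbf{k}_1}$ from Appendix~\ref{app:linearity}, the same observation that the denominator equals ${n \choose l}$ because $\mathbf{k}_2$ is uniquely forced by $\mathbf{k}_1$, and the same substitution $\tilde{\mathbf{s}}=\mathbf{s}+\mathbf{s}^\prime$, $\tilde{\mathbf{t}}=\mathbf{t}+\mathbf{t}^\prime$ with $\tilde{\mathbf{s}}\neq\mathbf{0}_M$. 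Your explicit check that the reparametrization is onto (every difference pair is attainable and none overcounted) is a point the paper passes over silently, but it does not change the argument.
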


\begin{proof}
We have
\begin{equation}
\label{eq:P_S_w_simplify_proof}
\begin{aligned}
P_{\mathrm{S}}
&=\max_{\mathbf{s},\mathbf{t}}\max_{\mathbf{s}^\prime\neq\mathbf{s},\mathbf{t}^\prime}\mathbb{P}([\mathbf{s}^\prime,\mathbf{t}^\prime] \textrm{\,valid}\,|\,[\mathbf{s},\mathbf{t}] \textrm{\,observed})\\
&=\max_{\mathbf{s},\mathbf{t}}\max_{\mathbf{s}^\prime\neq\mathbf{s},\mathbf{t}^\prime}\frac{\left|\left\{\mathbf{k}\in\mathcal{K}:\begin{matrix*}[l]
    \mathbf{t}=\mathbf{c}_{\mathbf{s,k}_1}+\mathbf{k}_2,
    \\
    \mathbf{t}^\prime=\mathbf{c}_{\mathbf{s}^\prime,\mathbf{k}_1}+\mathbf{k}_2
\end{matrix*}\right\}\right|}
{|\{\mathbf{k}\in\mathcal{K}:\mathbf{t}=\mathbf{c}_{\mathbf{s,k}_1}+\mathbf{k}_2\}|}\\
\overset{\scriptstyle\mathrm{(b)}}&{=}\max_{\mathbf{s},\mathbf{t}}\max_{\mathbf{s}^\prime\neq\mathbf{s},\mathbf{t}^\prime}\frac{\left|\left\{\mathbf{k}\in\mathcal{K}:\begin{matrix*}[l]
    \mathbf{t}=\mathbf{c}_{\mathbf{s,k}_1}+\mathbf{k}_2,
    \\
    \mathbf{t}+\mathbf{t}^\prime=\mathbf{c}_{\mathbf{s}+\mathbf{s}^\prime,\mathbf{k}_1}
\end{matrix*}\right\}\right|}
{{n \choose l}}\\
&=\max_{\mathbf{s},\mathbf{t}}\max_{\mathbf{s}^\prime\neq\mathbf{s},\mathbf{t}^\prime}\frac{\left|\left\{\mathbf{k}_1\in\mathcal{K}_1:\mathbf{t}+\mathbf{t}^\prime=\mathbf{c}_{\mathbf{s}+\mathbf{s}^\prime,\mathbf{k}_1}\right\}\right|}
{{n \choose l}}\\
\overset{{\scriptstyle\mathrm{(c)}}}&{=}\max_{\Tilde{\mathbf{s}}\neq\mathbf{0}_M}\max_{\Tilde{\mathbf{t}}}
\frac{\left|\left\{\mathbf{k}_1\in\mathcal{K}_1: \Tilde{\mathbf{t}}=\mathbf{c}_{\Tilde{\mathbf{s}},\mathbf{k}_1}\right\}\right|}
{{n \choose l}},
\end{aligned}
\end{equation}
where $\mathrm{(b)}$ is by the linearity of the codes that $\mathbf{c}_{\mathbf{s},\mathbf{k}_1}+\mathbf{c}_{\mathbf{s}^\prime,\mathbf{k}_1}=\mathbf{c}_{\mathbf{s}+\mathbf{s}^\prime,\mathbf{k}_1}$, together with noting that  ${|\{\mathbf{k}\in\mathcal{K}:\mathbf{t}=\mathbf{c}_{\mathbf{s,k}_1}+\mathbf{k}_2\}|}={n\choose l}$, which is equivalent to \eqref{eq:P_I_Denom}; 
in $\mathrm{(c)}$ we let $\Tilde{\mathbf{s}}=\mathbf{s}+\mathbf{s}^\prime$ and $\Tilde{\mathbf{t}}=\mathbf{t}+\mathbf{t}^\prime$. 
Note that since $\mathbf{s}\neq\mathbf{s}^\prime$, we must have $\Tilde{\mathbf{s}}\neq\mathbf{0}_M$.
\end{proof}

\begin{remark}
Theorem~\ref{thm:P_S_ind} implies that, in RM-A-codes, the expression for $P_{\mathrm{S}}$ in \eqref{eq:P_S_worst} involving two maximizations over all codewords can be simplified to \eqref{eq:P_S_w_thm_simple} which involves only a maximization over all nonzero codewords with calculating the maximum number of appearance of the valid tags corresponding to each codeword. Later, we show this can be even more simplified to a search only over the values of the weight of the codewords.

\end{remark}
For further analysis, we define the maximum probability of appearance of a valid tag $\Tilde{\mathbf{t}}$ in a nonzero codeword $\mathbf{c}_{\Tilde{\mathbf{s}}}$ given the key $\mathbf{k}_1$ as 
\begin{equation}
    P_\mathrm{t}(\mathbf{c}_{\Tilde{\mathbf{s}}})=\max_{\Tilde{\mathbf{t}}}
\frac{\left|\left\{\mathbf{k}_1\in\mathcal{K}_1: \Tilde{\mathbf{t}}=\mathbf{c}_{\Tilde{\mathbf{s}},\mathbf{k}_1}\right\}\right|}
{{n \choose l}},
\end{equation}
where $\left|\left\{\mathbf{k}_1\in\mathcal{K}_1: \Tilde{\mathbf{t}}=\mathbf{c}_{\Tilde{\mathbf{s}},\mathbf{k}_1}\right\}\right|$ can be regarded as the number of appearance of $\Tilde{\mathbf{t}}$ in the coordinates of $\mathbf{c}_{\Tilde{\mathbf{s}}}$, such that, according to Theorem~\ref{thm:P_S_ind}, we have
$$P_{\mathrm{S}}=\underset{\mathbf{c}_{\Tilde{\mathbf{s}}}}{\max}\,P_{\mathrm{t}}(\mathbf{c}_{\Tilde{\mathbf{s}}}).$$
For instance, considering Example~\ref{exp:toy_construct} in the previous section, a valid nonzero codeword would be $\mathbf{c}_{\Tilde{\mathbf{s}}}=[1,1,0,0]$ (setting $\Tilde{\mathbf{s}}=[0,1]$). 
To generate a valid tag $\Tilde{\mathbf{t}}=[1]$, we have either $\mathbf{k}_1=[1,0,0,0]$ or $\mathbf{k}_1=[0,1,0,0]$.
Thus, the number of appearance of $\Tilde{\mathbf{t}}=[1]$ is $\left|\left\{\mathbf{k}_1\in\mathcal{K}_1: \Tilde{\mathbf{t}}=[1]=\mathbf{c}_{\Tilde{\mathbf{s}},\mathbf{k}_1}\right\}\right|=\left|\{[1,0,0,0],[0,1,0,0]\}\right|=2$.

The following lemma presents an expression for the number of appearances of the tag $\Tilde{\mathbf{t}}$ given certain weights for both the nonzero codeword $\mathbf{c}_{\Tilde{\mathbf{s}}}$ and the tag $\Tilde{\mathbf{t}}$. We denote $\mathrm{wt}(\cdot)$ as the weight of a vector.
\begin{lemma}
\label{lemma:choice1}
     For a given nonzero codeword $\mathbf{c}_{\Tilde{\mathbf{s}}}$ with $\mathrm{wt}(\mathbf{c}_{\Tilde{\mathbf{s}}})=w$, the number of tags $\Tilde{\mathbf{t}}$ with $\mathrm{wt}(\Tilde{\mathbf{t}})=w_t$ is given by
    \begin{equation}   
        \left|\left\{\mathbf{k}_1\in\mathcal{K}_1:\begin{matrix*}[l]\Tilde{\mathbf{t}}=\mathbf{c}_{\Tilde{\mathbf{s}},\mathbf{k}_1},
        \\\mathrm{wt}(\mathbf{c}_{\Tilde{\mathbf{s}}})=w,\mathrm{wt}(\Tilde{\mathbf{t}})=w_t\end{matrix*}
        \right\}\right|={{w \choose w_t}{n-w \choose l-w_t}}.
    \end{equation}
\end{lemma}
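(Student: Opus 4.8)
The plan is to read the left-hand side as a pure counting problem over the choice of the key $\mathbf{k}_1$, and to evaluate it by partitioning the $n$ coordinate positions according to where $\mathbf{c}_{\tilde{\mathbf{s}}}$ is one versus zero. First I would invoke Remark~\ref{rmk:k_1} to identify each admissible key $\mathbf{k}_1\in\mathcal{K}_1$ with the size-$l$ subset $\mathrm{supp}(\mathbf{k}_1)\subseteq\{1,\dots,n\}$ of coordinates it selects; this is a bijection onto the $\binom{n}{l}$ such subsets, consistent with $|\mathcal{K}_1|=\binom{n}{l}$. By the definition of the projection, $\mathbf{c}_{\tilde{\mathbf{s}},\mathbf{k}_1}$ is the length-$l$ vector obtained by restricting $\mathbf{c}_{\tilde{\mathbf{s}}}$ to $\mathrm{supp}(\mathbf{k}_1)$, so its weight equals the number of selected coordinates at which $\mathbf{c}_{\tilde{\mathbf{s}}}$ is one, that is, $\mathrm{wt}(\mathbf{c}_{\tilde{\mathbf{s}},\mathbf{k}_1})=|\mathrm{supp}(\mathbf{k}_1)\cap\mathrm{supp}(\mathbf{c}_{\tilde{\mathbf{s}}})|$. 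Hence the condition $\mathrm{wt}(\tilde{\mathbf{t}})=w_t$ in the counted set is equivalent to requiring exactly $w_t$ of the $l$ selected positions to lie in $\mathrm{supp}(\mathbf{c}_{\tilde{\mathbf{s}}})$.

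Next, since $\mathrm{wt}(\mathbf{c}_{\tilde{\mathbf{s}}})=w$, the index set $\{1,\dots,n\}$ splits into the $w$ positions of $\mathrm{supp}(\mathbf{c}_{\tilde{\mathbf{s}}})$ and the remaining $n-w$ positions. Building an admissible key with $|\mathrm{supp}(\mathbf{k}_1)\cap\mathrm{supp}(\mathbf{c}_{\tilde{\mathbf{s}}})|=w_t$ then amounts to two independent choices: picking $w_t$ of the $w$ one-positions and $l-w_t$ of the $n-w$ zero-positions. The number of such keys is therefore the product $\binom{w}{w_t}\binom{n-w}{l-w_t}$, which is precisely the claimed value. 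I would close by observing that the standard convention $\binom{a}{b}=0$ for $b<0$ or $b>a$ automatically makes this count vanish in the infeasible regimes ($w_t>w$ or $l-w_t>n-w$), so no separate case analysis is required.

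There is no genuinely hard step here, as the statement is elementary combinatorics; the only point demanding care is the translation in the first paragraph, namely confirming that the weight of the projected vector is governed solely by how many selected coordinates fall on $\mathrm{supp}(\mathbf{c}_{\tilde{\mathbf{s}}})$, independently of which specific positions are chosen and of the ordering used to list the $l$ selected entries, so that the count factorizes cleanly into a product of two binomials. As a sanity check I would include the toy instance from Example~\ref{exp:toy_construct}, where $w=2$, $n=4$, $l=1$, $w_t=1$ give $\binom{2}{1}\binom{2}{0}=2$, matching the two keys $[1,0,0,0]$ and $[0,1,0,0]$ found there.
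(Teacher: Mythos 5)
Your proposal is correct and follows essentially the same route as the paper's proof: both count the keys $\mathbf{k}_1$ producing a weight-$w_t$ projection by splitting the coordinate set into the $w$ one-positions and $n-w$ zero-positions of $\mathbf{c}_{\Tilde{\mathbf{s}}}$ and multiplying $\binom{w}{w_t}\binom{n-w}{l-w_t}$. Your version merely makes explicit a few points the paper leaves implicit (the bijection between keys and size-$l$ subsets, the identity $\mathrm{wt}(\mathbf{c}_{\Tilde{\mathbf{s}},\mathbf{k}_1})=|\mathrm{supp}(\mathbf{k}_1)\cap\mathrm{supp}(\mathbf{c}_{\Tilde{\mathbf{s}}})|$, and the vanishing-binomial convention), which is a welcome but not substantively different elaboration.
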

\begin{proof}
The proof is by straightforward counting arguments.     
Since the tag $\Tilde{\mathbf{t}}$ is constructed by choosing $l$ entries from the nonzero codeword $\mathbf{c}_{\Tilde{\mathbf{s}}}$, choosing $w_t$ ones from $\mathbf{c}_{\Tilde{\mathbf{s}}}$ has ${ w \choose w_t} $ choices; choosing $l-w_t$ zeros from $\mathbf{c}_{\Tilde{\mathbf{s}}}$ has ${n-w \choose l-w_t}$ choices. Therefore, there are ${ w \choose w_t}{n-w \choose l-w_t}$ choices for constructing a tag $\Tilde{\mathbf{t}}$ with $w_t$ ones and $l-w_t$ zeros.
\end{proof}
There exists a special case in which the nonzero codeword $\mathbf{c}_{\Tilde{\mathbf{s}}}$ with $\mathrm{wt}(\mathbf{c}_{\Tilde{\mathbf{s}}})=w$ can be divided into two sub-blocks where one vector is an all-one vector as $\mathbf{1}_w$, while the other one is an all-zero vector as $\mathbf{0}_{n-w}$, i.e., $\mathbf{c}_{\Tilde{\mathbf{s}}}=[\mathbf{1}_w,\mathbf{0}_{n-w}]$. We analyze the number of appearances of the tags in such special cases in the following lemma.
\begin{lemma}
\label{lemma:choice2}
    Given a nonzero codeword $\mathbf{c}_{\Tilde{\mathbf{s}}}=[\mathbf{1}_w,\mathbf{0}_{n-w}]$ with $\mathrm{wt}(\mathbf{c}_{\Tilde{\mathbf{s}}})=w$, there exists exactly one valid tag $\Tilde{\mathbf{t}}=[\mathbf{1}_{w_t},\mathbf{0}_{l-w_t}]$ with $\mathrm{wt}(\Tilde{\mathbf{t}})=w_t$. Furthermore, for this $\Tilde{\mathbf{t}}$ we have
    \begin{equation}
    \begin{aligned}
        \left|\left\{\mathbf{k}_1\in\mathcal{K}_1:\begin{matrix*}[l]
        \Tilde{\mathbf{t}}=\mathbf{c}_{\Tilde{\mathbf{s}},\mathbf{k}_1}=[\mathbf{1}_w,\mathbf{0}_{n-w}],\\\mathbf{c}_{\Tilde{\mathbf{s}}}=[\mathbf{1}_w,\mathbf{0}_{n-w}],
        \\\mathrm{wt}(\mathbf{c}_{\Tilde{\mathbf{s}}})=w,\mathrm{wt}(\Tilde{\mathbf{t}})=w_t\end{matrix*}
        \right\}\right|={{w \choose w_t}{n-w \choose l-w_t}}.
    \end{aligned}
    \end{equation}
\end{lemma}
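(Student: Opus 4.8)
The plan is to exploit the block structure of the codeword $\mathbf{c}_{\Tilde{\mathbf{s}}}=[\mathbf{1}_w,\mathbf{0}_{n-w}]$, in which every nonzero coordinate lies in the first $w$ positions and every zero coordinate lies in the last $n-w$ positions. Recall that a key $\mathbf{k}_1\in\mathcal{K}_1$ selects $l$ coordinates of $\mathbf{c}_{\Tilde{\mathbf{s}}}$ (the positions of its $l$ ones), and the tag $\mathbf{c}_{\Tilde{\mathbf{s}},\mathbf{k}_1}$ is the vector of those selected entries listed in increasing order of their coordinate index.

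The first step is to observe that, because of this ordering convention together with the block structure, any selected coordinate carrying a $1$ has index at most $w$, while any selected coordinate carrying a $0$ has index strictly greater than $w$; hence every selected $1$ precedes every selected $0$ in the resulting tag. Consequently, if $w_t$ of the selected coordinates fall in the all-ones block, the tag is forced to be exactly $[\mathbf{1}_{w_t},\mathbf{0}_{l-w_t}]$. This immediately yields the uniqueness claim: for each attainable weight $w_t$ there is precisely one valid tag of that weight, namely the monotone vector $[\mathbf{1}_{w_t},\mathbf{0}_{l-w_t}]$, so no other arrangement of $w_t$ ones and $l-w_t$ zeros can occur.

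Having pinned down the unique tag, the second step is to count the keys that produce it. A key yields $[\mathbf{1}_{w_t},\mathbf{0}_{l-w_t}]$ if and only if it selects exactly $w_t$ of the $w$ positions in the all-ones block and exactly $l-w_t$ of the $n-w$ positions in the all-zeros block. These two choices are independent, giving $\binom{w}{w_t}$ ways for the ones and $\binom{n-w}{l-w_t}$ ways for the zeros, for a total of $\binom{w}{w_t}\binom{n-w}{l-w_t}$, as claimed. This count coincides with that of Lemma~\ref{lemma:choice1}; the distinction is that here all of these keys map to the single tag $[\mathbf{1}_{w_t},\mathbf{0}_{l-w_t}]$, whereas in the general setting they may be distributed among several distinct tags of weight $w_t$.

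I expect the only delicate point to be the first step, namely justifying that no tag other than the monotone one can arise. This rests entirely on the convention that the selected entries are read off in index order, so I would state that convention explicitly and then note that the separation of the $1$-block and the $0$-block makes the monotone form unavoidable. Once that ordering argument is in place, the counting is routine and identical to the argument already used in Lemma~\ref{lemma:choice1}.
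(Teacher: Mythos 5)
Your proof is correct and follows essentially the same route as the paper's: identify the unique monotone tag $[\mathbf{1}_{w_t},\mathbf{0}_{l-w_t}]$ forced by the block structure, then count keys by independently choosing $w_t$ positions from the ones block and $l-w_t$ from the zeros block, giving ${w \choose w_t}{n-w \choose l-w_t}$. Your explicit justification of the index-ordering convention (every selected $1$ precedes every selected $0$) is a welcome sharpening of a step the paper's proof leaves implicit, but the argument is otherwise identical.
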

\begin{proof}
Note that a tag $\Tilde{\mathbf{t}}$ is constructed by choosing $w_t$ ones and $l-w_t$ zeros from a nonzero codeword $\mathbf{c}_{\Tilde{\mathbf{s}}}$. 
Thus, given the nonzero codeword $\mathbf{c}_{\Tilde{\mathbf{s}}}=[\mathbf{1}_w,\mathbf{0}_{n-w}]$, to construct a tag $\Tilde{\mathbf{t}}$ with $\mathrm{wt}(\Tilde{\mathbf{t}})=w_t$, there is only one valid tag $\Tilde{\mathbf{t}}=[\mathbf{1}_{w_t},\mathbf{0}_{l-w_t}]$ since we should choose $w_t$ ones from $\mathbf{1}_w$ and $l-w_t$ zeros from $\mathbf{0}_{n-w}$, where $\mathbf{1}_w$ and $\mathbf{0}_{n-w}$ are the vectors which constructed the given nonzero codeword $\mathbf{c}_{\Tilde{\mathbf{s}}}=[\mathbf{1}_w,\mathbf{0}_{n-w}]$. 
Then, choosing $w_t$ ones from $\mathbf{1}_w$ has ${w \choose w_t}$ choices; 
choosing $l-w_t$ zeros from $\mathbf{0}_{n-w}$ has ${n-w \choose l-w_t}$ choices. 
Therefore, there are ${ w \choose w_t}{n-w \choose l-w_t}$ choices for constructing a tag $\Tilde{\mathbf{t}}$ of weight $w_t$. 
\end{proof}

Then, we analyze all valid tags given such nonzero codeword in the following lemma.
\begin{lemma}
\label{lemma:cases}
    Given a weight-$w$ nonzero codeword $\mathbf{c}_{\Tilde{\mathbf{s}}}=[\mathbf{1}_w,\mathbf{0}_{n-w}]$, there are four cases of the valid tags $\Tilde{\mathbf{t}}=[\mathbf{1}_{w_t},\mathbf{0}_{l-w_t}]$:
\begin{itemize}
        \item Case 1 ($w\geq l$, $l\geq n-w$): $l\geq w_t\geq l-(n-w)$.
        \item Case 2 ($w< l$, $l\geq n-w$): $w\geq w_t\geq l-(n-w)$.
        \item Case 3 ($w\geq l$, $l< n-w$): $l\geq w_t\geq 0$.
        \item Case 4 ($w< l$, $l< n-w$): $w\geq w_t\geq 0$.
\end{itemize}
\end{lemma}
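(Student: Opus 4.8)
The plan is to read off the feasible range of the tag weight $w_t$ directly from the two independent selection constraints, and then resolve the resulting $\min$/$\max$ bounds by splitting on the two sign comparisons that define the four cases. First I would invoke Lemma~\ref{lemma:choice2}: for the codeword $\mathbf{c}_{\Tilde{\mathbf{s}}}=[\mathbf{1}_w,\mathbf{0}_{n-w}]$, any valid tag of weight $w_t$ is necessarily of the form $\Tilde{\mathbf{t}}=[\mathbf{1}_{w_t},\mathbf{0}_{l-w_t}]$, obtained by selecting $w_t$ of the $w$ leading ones and the remaining $l-w_t$ coordinates from the trailing $n-w$ zeros.

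Next I would write down the two constraints that make such a selection feasible. Selecting $w_t$ ones from the $w$ available ones forces $0\le w_t\le w$, while selecting $l-w_t$ zeros from the $n-w$ available zeros forces $0\le l-w_t\le n-w$, i.e.\ $l-(n-w)\le w_t\le l$. Intersecting these two intervals yields the single feasible range
\[
\max\{0,\,l-(n-w)\}\;\le\;w_t\;\le\;\min\{w,\,l\}.
\]

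Finally, the four cases are exactly the four sign patterns arising from comparing $w$ with $l$ (which decides whether $\min\{w,l\}$ equals $l$ or $w$) and comparing $l$ with $n-w$ (which decides whether $\max\{0,l-(n-w)\}$ equals $l-(n-w)$ or $0$). I would go through each pairing: when $w\ge l$ the upper bound collapses to $l$ and when $w<l$ it collapses to $w$; when $l\ge n-w$ the lower bound collapses to $l-(n-w)$ and when $l<n-w$ it collapses to $0$. Substituting these four combinations into the displayed interval reproduces Case~1 through Case~4 verbatim. The argument is purely combinatorial, so I do not anticipate a genuine obstacle; the only point requiring care is the bookkeeping of the four $\min$/$\max$ resolutions, ensuring that the boundary equalities $w=l$ and $l=n-w$ are assigned consistently so that the case list is exhaustive and non-overlapping.
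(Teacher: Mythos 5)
Your proposal is correct and follows essentially the same route as the paper's proof: both derive the upper bound on $w_t$ by comparing $w$ with $l$ and the lower bound by comparing $l$ with $n-w$, then enumerate the four resulting sign patterns. Your formulation via the intersected interval $\max\{0,\,l-(n-w)\}\leq w_t\leq\min\{w,\,l\}$ is a slightly cleaner packaging of the identical counting argument.
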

\begin{proof}
To specify the upper bound on $w_t$, we first consider the relation between $w$ and $l$. When $w\geq l$, the nonzero codeword $\mathbf{c}_{\Tilde{\mathbf{s}}}$ has more ones than the length of the tag $\Tilde{\mathbf{t}}$, which is $l$. Thus, the maximum weight of the tag $\Tilde{\mathbf{t}}$ is $l$ such that $l\geq w_t$; When $w< l$, the nonzero codeword $\mathbf{c}_{\Tilde{\mathbf{s}}}$ has less ones than the length of the tag $\Tilde{\mathbf{t}}$, which is $l$. Thus, the maximum weight of the tag $\Tilde{\mathbf{t}}$ is $w$ such that $w\geq w_t$.

Secondly, to specify the lower bound on $w_t$, we consider the relation between $l$ and $n-w$. When $l\geq n-w$, the nonzero codeword $\mathbf{c}_{\Tilde{\mathbf{s}}}$ has less zeros than the length of the tag $\Tilde{\mathbf{t}}$, which is $l$. 
Thus, the minimum weight of the tag $\Tilde{\mathbf{t}}$ is $l$ such that $w_t\geq l-(n-w)$; When $l< n-w$, the nonzero codeword $\mathbf{c}_{\Tilde{\mathbf{s}}}$ has more zeros than the length of the tag $\Tilde{\mathbf{t}}$, which is $l$. Thus, the minimum weight of the tag $\Tilde{\mathbf{t}}$ is $0$ such that $w_t\geq 0$.

Considering the four cases from the upper bounds and the lower bounds on $w_t$ as stated above completes the proof of the lemma.
\end{proof}

The following lemma states that the maximum $P_\mathrm{t}(\cdot)$ over weight-$w$ nonzero codewords occurs for a nonzero codeword of the type $\mathbf{c}_{\Tilde{\mathbf{s}}}=[\mathbf{1}_w,\mathbf{0}_{n-w}]$, if such a codeword exists.

\begin{lemma}
\label{lemma:choice3}
    Given a nonzero codeword $\mathbf{c}_{\Tilde{\mathbf{s}}}=[\mathbf{1}_w,\mathbf{0}_{n-w}]$ and an arbitrary valid nonzero codeword $\mathbf{c}_{\Tilde{\mathbf{s}}}^\prime$ which $\mathrm{wt}(\mathbf{c}_{\Tilde{\mathbf{s}}})=\mathrm{wt}(\mathbf{c}_{\Tilde{\mathbf{s}}}^\prime)=w$, we have $P_\mathrm{t}(\mathbf{c}_{\Tilde{\mathbf{s}}}=[\mathbf{1}_w,\mathbf{0}_{n-w}])\geq P_\mathrm{t}(\mathbf{c}_{\Tilde{\mathbf{s}}}^\prime)$.
\end{lemma}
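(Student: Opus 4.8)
The plan is to reduce the comparison to a single clean counting inequality assembled from the two preceding lemmas. Recall that $P_\mathrm{t}(\cdot)$ is a maximum, taken over individual tags $\Tilde{\mathbf{t}}$, of the fraction of keys $\mathbf{k}_1$ that project the codeword onto that exact tag. First I would handle the left-hand side: for the special codeword $[\mathbf{1}_w,\mathbf{0}_{n-w}]$, Lemma~\ref{lemma:choice2} tells us that each attainable weight $w_t$ corresponds to a single valid tag $[\mathbf{1}_{w_t},\mathbf{0}_{l-w_t}]$ carrying the full count ${w \choose w_t}{n-w \choose l-w_t}$. Hence $P_\mathrm{t}([\mathbf{1}_w,\mathbf{0}_{n-w}]) = \max_{w_t} {w \choose w_t}{n-w \choose l-w_t}/{n \choose l}$, with the maximum taken over the admissible range of $w_t$ supplied by Lemma~\ref{lemma:cases}.

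Next I would bound the right-hand side. For an arbitrary weight-$w$ codeword $\mathbf{c}_{\Tilde{\mathbf{s}}}^\prime$, I fix any tag $\Tilde{\mathbf{t}}$ of weight $w_t$ and compare the number of keys producing exactly $\Tilde{\mathbf{t}}$ against the number of keys producing \emph{some} weight-$w_t$ tag. The latter is precisely ${w \choose w_t}{n-w \choose l-w_t}$ by Lemma~\ref{lemma:choice1}. Since the keys producing a weight-$w_t$ tag partition according to which specific tag they yield, the count for any single tag is at most this aggregate; that is, $|\{\mathbf{k}_1\in\mathcal{K}_1 : \Tilde{\mathbf{t}}=\mathbf{c}_{\Tilde{\mathbf{s}},\mathbf{k}_1}^\prime\}| \le {w \choose w_t}{n-w \choose l-w_t}$ for every such $\Tilde{\mathbf{t}}$.

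Combining the two observations finishes the argument. Taking the maximum over all tags of $\mathbf{c}_{\Tilde{\mathbf{s}}}^\prime$, each maximand is a weight-$w_t$ tag count and is therefore bounded by ${w \choose w_t}{n-w \choose l-w_t}$, hence by $\max_{w_t}{w \choose w_t}{n-w \choose l-w_t}$. Dividing by ${n \choose l}$ yields $P_\mathrm{t}(\mathbf{c}_{\Tilde{\mathbf{s}}}^\prime) \le P_\mathrm{t}([\mathbf{1}_w,\mathbf{0}_{n-w}])$, as claimed.

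The one conceptual point to justify carefully—and the step I expect to be the main obstacle—is the partition claim underlying the bound in the second paragraph: that the keys producing weight-$w_t$ tags split according to the exact tag produced, and that the special codeword $[\mathbf{1}_w,\mathbf{0}_{n-w}]$ is the unique configuration for which this split degenerates onto a single tag, so that it alone realizes the full weight-class count. This hinges on the projection preserving the coordinate order, so that for $[\mathbf{1}_w,\mathbf{0}_{n-w}]$ every weight-$w_t$ selection produces the identical tag $[\mathbf{1}_{w_t},\mathbf{0}_{l-w_t}]$, whereas a codeword with scattered ones spreads the same total count across several distinct tags, strictly lowering its per-tag maximum. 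Everything else is routine bookkeeping with the binomial counts already provided by Lemmas~\ref{lemma:choice1} and~\ref{lemma:choice2}.
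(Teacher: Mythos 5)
Your proposal is correct and takes essentially the same route as the paper: it computes $P_\mathrm{t}(\mathbf{c}_{\Tilde{\mathbf{s}}}=[\mathbf{1}_w,\mathbf{0}_{n-w}])$ exactly via Lemma~\ref{lemma:choice2}, then bounds each individual tag count for an arbitrary weight-$w$ codeword by the aggregate weight-class count of Lemma~\ref{lemma:choice1}, which is precisely the paper's argument (the paper formalizes your partition step by summing the per-tag counts over the $T$ distinct weight-$w_t$ tags). One caveat: the closing claim that $[\mathbf{1}_w,\mathbf{0}_{n-w}]$ is the \emph{unique} configuration degenerating each weight class onto a single tag, and that scattered ones \emph{strictly} lower the per-tag maximum, is both unnecessary for the non-strict inequality and false in general (e.g., $[\mathbf{0}_{n-w},\mathbf{1}_w]$ degenerates identically, and every weight-$w$ codeword concentrates its $w_t=0$ class on the single tag $\mathbf{0}_l$), but your first three paragraphs already complete the proof without it.
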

\begin{proof}
Based on Lemma~\ref{lemma:choice2}, we have
\begin{equation}
\label{eq:ps1}
\begin{aligned}
    &P_\mathrm{t}(\mathbf{c}_{\Tilde{\mathbf{s}}}=[\mathbf{1}_w,\mathbf{0}_{n-w}])\\
    =&\max_{w_t}\frac{\begin{aligned}
        \left|\left\{\mathbf{k}_1\in\mathcal{K}_1:\begin{matrix*}[l]\Tilde{\mathbf{t}}=\mathbf{c}_{\Tilde{\mathbf{s}},\mathbf{k}_1}=[\mathbf{1}_w,\mathbf{0}_{n-w}],\\\mathbf{c}_{\Tilde{\mathbf{s}}}=[\mathbf{1}_w,\mathbf{0}_{n-w}],
        \\\mathrm{wt}(\mathbf{c}_{\Tilde{\mathbf{s}}})=w,\mathrm{wt}(\Tilde{\mathbf{t}})=w_t\end{matrix*}
        \right\}\right|
    \end{aligned}}{{n \choose l}}\\
    =&\max_{w_t}\frac{{{w \choose w_t}{n-w \choose l-w_t}}}{{n \choose l}}.
\end{aligned}
\end{equation}

By Lemma~\ref{lemma:choice1}, we obtain that \begin{equation}   
        \left|\left\{\mathbf{k}_1\in\mathcal{K}_1:\begin{matrix*}[l]\Tilde{\mathbf{t}}=\mathbf{c}_{\Tilde{\mathbf{s}},\mathbf{k}_1}^\prime,
        \\\mathrm{wt}(\mathbf{c}_{\Tilde{\mathbf{s}}}^\prime)=w,\mathrm{wt}(\Tilde{\mathbf{t}})=w_t\end{matrix*}
        \right\}\right|={{w \choose w_t}{n-w \choose l-w_t}}.
\end{equation}
Let the number of valid tags with $\mathrm{wt}(\mathbf{c}_{\Tilde{\mathbf{s}}}^\prime)=w$ and $\mathrm{wt}(\Tilde{\mathbf{t}})=w_t$ be $T$, i.e.,
\begin{equation}
    \left|\left\{\Tilde{\mathbf{t}}\in\mathcal{T}:\begin{matrix*}[l]\Tilde{\mathbf{t}}=\mathbf{c}_{\Tilde{\mathbf{s}},\mathbf{k}_1}^\prime,
    \\\mathrm{wt}(\mathbf{c}_{\Tilde{\mathbf{s}}}^\prime)=w,\mathrm{wt}(\Tilde{\mathbf{t}})=w_t\end{matrix*}\right\}\right|=T.
\end{equation}
Then we have
\begin{equation}
\begin{aligned}
    \sum_{i=1}^T\left|\left\{\mathbf{k}_1\in\mathcal{K}_1:\begin{matrix*}[l]\Tilde{\mathbf{t}}_i=\mathbf{c}_{\Tilde{\mathbf{s}},\mathbf{k}_1}^\prime,
    \\\mathrm{wt}(\mathbf{c}_{\Tilde{\mathbf{s}}}^\prime)=w,\mathrm{wt}(\Tilde{\mathbf{t}}_i)=w_t\end{matrix*}\right\}\right|
    \\={{w \choose w_t}{n-w \choose l-w_t}}.
\end{aligned}
\end{equation}
Thus, we obtain that 
\begin{equation}
    \left|\left\{\mathbf{k}_1\in\mathcal{K}_1:\begin{matrix*}[l]\Tilde{\mathbf{t}}_i=\mathbf{c}_{\Tilde{\mathbf{s}},\mathbf{k}_1}^\prime,
    \\\mathrm{wt}(\mathbf{c}_{\Tilde{\mathbf{s}}}^\prime)=w,\mathrm{wt}(\Tilde{\mathbf{t}}_i)=w_t\end{matrix*}\right\}\right|\leq{{w \choose w_t}{n-w \choose l-w_t}},
\end{equation}
for $i\in \{1,\dots,T\}$. Therefore, we have
\begin{equation}
\label{eq:ps2}
    P_\mathrm{t}(\mathbf{c}_{\Tilde{\mathbf{s}}}^\prime)
    \leq\max_{w_t}\frac{{{w \choose w_t}{n-w \choose l-w_t}}}{{n \choose l}}.
\end{equation}
Combining \eqref{eq:ps1} and \eqref{eq:ps2}, yields $P_\mathrm{t}(\mathbf{c}_{\Tilde{\mathbf{s}}}=[\mathbf{1}_w,\mathbf{0}_{n-w}])\geq P_\mathrm{t}(\mathbf{c}_{\Tilde{\mathbf{s}}}^\prime)$, for $\mathrm{wt}(\mathbf{c}_{\Tilde{\mathbf{s}}})=\mathrm{wt}(\mathbf{c}_{\Tilde{\mathbf{s}}}^\prime)=w$.
\end{proof}

Furthermore, the following lemma helps us to further reduce the complexity of calculating the probability of success of the substitution attack $P_{\mathrm{S}}$.

\begin{lemma}
\label{lemma:choice4}
    Given $\mathbf{c}_{\Tilde{\mathbf{s}}}=[\mathbf{1}_w,\mathbf{0}_{n-w}]$ and $\mathbf{c}_{\Tilde{\mathbf{s}}}^\prime=[\mathbf{1}_{w^\prime},\mathbf{0}_{n-w^\prime}]$, when $w^\prime=n-w$, we have $P_\mathrm{t}(\mathbf{c}_{\Tilde{\mathbf{s}}}=[\mathbf{1}_w,\mathbf{0}_{n-w}])=P_\mathrm{t}(\mathbf{c}_{\Tilde{\mathbf{s}}}^\prime=[\mathbf{1}_{w^\prime},\mathbf{0}_{n-w^\prime}])$.
\end{lemma}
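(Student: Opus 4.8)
The plan is to reduce both sides to the explicit binomial formula established for codewords of the special form $[\mathbf{1}_w,\mathbf{0}_{n-w}]$, and then exhibit a bijective reindexing of the maximization that matches the two expressions term for term. The statement is fundamentally a symmetry observation, so the work lies in making the index substitution precise rather than in any deep structural insight.

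First I would invoke the expression for $P_\mathrm{t}(\cdot)$ on codewords of this special type, which follows from Lemma~\ref{lemma:choice2} and is recorded in \eqref{eq:ps1}, namely
\[
P_\mathrm{t}(\mathbf{c}_{\Tilde{\mathbf{s}}}=[\mathbf{1}_w,\mathbf{0}_{n-w}])=\max_{w_t}\frac{{w \choose w_t}{n-w \choose l-w_t}}{{n \choose l}}.
\]
Applying the very same expression to $\mathbf{c}_{\Tilde{\mathbf{s}}}^\prime=[\mathbf{1}_{w^\prime},\mathbf{0}_{n-w^\prime}]$ and then substituting the hypothesis $w^\prime=n-w$ (so that $n-w^\prime=w$) gives
\[
P_\mathrm{t}(\mathbf{c}_{\Tilde{\mathbf{s}}}^\prime=[\mathbf{1}_{w^\prime},\mathbf{0}_{n-w^\prime}])=\max_{w_t}\frac{{n-w \choose w_t}{w \choose l-w_t}}{{n \choose l}}.
\]

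Next I would perform the change of index $w_t\mapsto l-w_t$ in this second maximization. Under this substitution the numerator ${n-w \choose w_t}{w \choose l-w_t}$ becomes ${n-w \choose l-w_t}{w \choose w_t}$, which is exactly the summand appearing in the first expression. Since $w_t\mapsto l-w_t$ is a bijection on the integers, the two maximizations run over identical collections of values, and so their maxima agree; dividing by the common normalizer ${n\choose l}$ yields the claimed equality $P_\mathrm{t}(\mathbf{c}_{\Tilde{\mathbf{s}}}=[\mathbf{1}_w,\mathbf{0}_{n-w}])=P_\mathrm{t}(\mathbf{c}_{\Tilde{\mathbf{s}}}^\prime=[\mathbf{1}_{w^\prime},\mathbf{0}_{n-w^\prime}])$.

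I do not anticipate a genuine obstacle here, since the argument is a direct symmetry. The only point requiring a moment of care is confirming that the reindexing respects the admissible range of $w_t$ delineated by the four cases of Lemma~\ref{lemma:cases}; however, this is handled automatically by the standard convention that ${a\choose b}=0$ whenever $b<0$ or $b>a$, so any term pushed outside its valid window simply contributes zero and therefore cannot affect either maximum. This lets me take both maxima over all integers $w_t$ without tracking the case-dependent bounds explicitly, which keeps the proof short.
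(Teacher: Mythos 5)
Your proof is correct and follows essentially the same route as the paper's: both reduce each side to the binomial expression $\max_{w_t}\binom{w}{w_t}\binom{n-w}{l-w_t}/\binom{n}{l}$ from Lemma~\ref{lemma:choice2} and then apply the substitution $w^\prime=n-w$, $w_t\mapsto l-w_t$ to match the two maximizations term by term. The paper merely dresses this reindexing in a concrete complement-and-reverse bijection (via reversal matrices $\mathbf{R}_n$ and $\mathbf{R}_l$), while your zero-convention for out-of-range binomial coefficients is a clean way to dispose of the case-dependent bounds of Lemma~\ref{lemma:cases}.
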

\begin{proof}
    We have $\mathbf{c}_{\Tilde{\mathbf{s}}}^\prime=(\mathbf{c}_{\Tilde{\mathbf{s}}}+\mathbf{1}_n)\mathbf{R}_n$, where $\mathbf{R}_n$ is a reversal matrix that reverses all $n$ entries in the vector $(\mathbf{c}_{\Tilde{\mathbf{s}}}+\mathbf{1}_n)\mathbf{R}_n$. 
    Accordingly, we have $\Tilde{\mathbf{t}}^\prime=(\Tilde{\mathbf{t}}+\mathbf{1}_l)\mathbf{R}_l$, where $\Tilde{\mathbf{t}}^\prime$ is the corresponding valid tag with $\mathrm{wt}(\Tilde{\mathbf{t}}^\prime)=w_t^\prime$, $\mathbf{R}_l$ is a reversal matrix that reverses all $l$ entries in the vector $\Tilde{\mathbf{t}}+\mathbf{1}_l$, such that $w_t^\prime=l-w_t$. Thus, we have
\begin{equation}
\begin{aligned}
    P_\mathrm{t}(\mathbf{c}_{\Tilde{\mathbf{s}}}^\prime=[\mathbf{1}_{w^\prime},\mathbf{0}_{n-w^\prime}])
    =\max_{w_t^\prime}\frac{{{w^\prime \choose w_t^\prime}{n-w^\prime \choose l-w_t^\prime}}}{{n \choose l}}\\
    \overset{{\scriptstyle\mathrm{(d)}}}{=}\max_{w_t}\frac{{{w \choose w_t}{n-w \choose l-w_t}}}{{n \choose l}}
    =P_\mathrm{t}(\mathbf{c}_{\Tilde{\mathbf{s}}}=[\mathbf{1}_w,\mathbf{0}_{n-w}]),
\end{aligned}
\end{equation}
where $(\mathrm{d})$ is based on $w^\prime=n-w$ and $w_t^\prime=l-w_t$.
\end{proof}

With Lemmas~\ref{lemma:choice3} and \ref{lemma:choice4}, we end the theoretical analysis by presenting the theorem which demonstrates the closed-form expression for $P_{\mathrm{S}}$.

\begin{theorem}
\label{thm:ps}
Given an RM-A-code as defined in Definition 1, we have
    \begin{equation}
    \begin{aligned}
        P_{\mathrm{S}}
        =\max_w\max_{w_t}\frac{{{w \choose w_t}{n-w \choose l-w_t}}}{{n \choose l}},
    \end{aligned}
    \end{equation}
    where the maximization is over all $w$, with $\frac{n}{2^r} \leq w\leq \frac{n}{2}$, for which there exists a codeword of the form $[\mathbf{1}_{w},\mathbf{0}_{n-w}]$.
\end{theorem}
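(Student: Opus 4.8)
The plan is to start from the simplified form of $P_\mathrm{S}$ that has already been established. By Theorem~\ref{thm:P_S_ind}, together with the subsequent identity $P_\mathrm{S}=\max_{\mathbf{c}_{\Tilde{\mathbf{s}}}}P_\mathrm{t}(\mathbf{c}_{\Tilde{\mathbf{s}}})$, it suffices to maximize $P_\mathrm{t}(\mathbf{c}_{\Tilde{\mathbf{s}}})$ over all nonzero codewords of the underlying sub-code. I would reorganize this maximization by Hamming weight, writing $P_\mathrm{S}=\max_{w}\,\max_{\mathrm{wt}(\mathbf{c}_{\Tilde{\mathbf{s}}})=w}P_\mathrm{t}(\mathbf{c}_{\Tilde{\mathbf{s}}})$, so that the outer maximum runs over the weights $w$ that actually occur. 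Since $\mathrm{RM}(m,r)$ has minimum distance $2^{m-r}=n/2^r$, every nonzero codeword satisfies $w\geq n/2^r$, which pins down the lower end of the range in the theorem.

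For a fixed weight $w$, I would introduce the shorthand $f(w):=\max_{w_t}\frac{{w \choose w_t}{n-w \choose l-w_t}}{{n \choose l}}$ and establish two facts. First, the counting identity of Lemma~\ref{lemma:choice1}, combined with a pigeonhole step exactly as in the proof of Lemma~\ref{lemma:choice3}, shows that \emph{every} weight-$w$ codeword satisfies $P_\mathrm{t}(\mathbf{c}_{\Tilde{\mathbf{s}}})\leq f(w)$: the ${w \choose w_t}{n-w \choose l-w_t}$ keys producing a weight-$w_t$ tag are distributed among all tags of that weight, so no single tag can be hit more often than this count. Second, by Lemmas~\ref{lemma:choice2} and~\ref{lemma:choice3}, when the structured codeword $[\mathbf{1}_w,\mathbf{0}_{n-w}]$ lies in the code this inequality is met with equality, since then there is a unique admissible tag of each weight. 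Invoking the reversal symmetry $f(w)=f(n-w)$ of Lemma~\ref{lemma:choice4}, I would fold the weights above $n/2$ onto their complements and restrict the search to $n/2^r\leq w\leq n/2$. Writing $W^{*}$ for the weights in this range that admit a codeword of the form $[\mathbf{1}_w,\mathbf{0}_{n-w}]$ (which always contains $w=n/2$, the last row of $\mathbf{G}_1$), this immediately yields the achievability bound $P_\mathrm{S}\geq\max_{w\in W^{*}}f(w)$.

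The delicate direction, and what I expect to be the main obstacle, is the matching upper bound $P_\mathrm{S}\leq\max_{w\in W^{*}}f(w)$. The per-weight analysis only delivers $P_\mathrm{S}\leq\max_{w\ \mathrm{achievable}}f(w)$, and the two maxima coincide solely when the achievable weight maximizing $f$ carries a structured codeword. This is genuinely subtle for two reasons: $f(w)$ is not monotone on $[n/2^r,n/2]$, and depending on $l$ its maximizer can sit either at $n/2^r$ or at $n/2$, which is precisely why the theorem retains a maximum over all of $W^{*}$ rather than a single weight; and $\mathrm{RM}(m,r)$ generally contains codewords of weights $w$ admitting \emph{no} $[\mathbf{1}_w,\mathbf{0}_{n-w}]$ representative (for instance weight $6$ in $\mathrm{RM}(4,2)$), for which the bound $P_\mathrm{t}\leq f(w)$ may be strict. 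One must therefore rule out that a non-structured codeword of such a weight quietly beats every structured codeword. I would attack this by proving that, evaluated along the actual weight spectrum of the code, $f$ attains its maximum at a weight in $W^{*}$: concretely, I would compare $f(w)$ for a non-structured achievable $w$ against its values at the neighboring structured weights, exploiting the unimodal hypergeometric behaviour of the peak between consecutive structured weights. Making this cross-weight comparison rigorous, rather than verifying it case by case as the examples suggest, is the crux; once it is secured, combining it with the achievability bound gives the claimed equality.
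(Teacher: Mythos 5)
Your proposal retraces the paper's own proof step for step up to the final reduction: the appeal to Theorem~\ref{thm:P_S_ind} and $P_\mathrm{S}=\max_{\mathbf{c}_{\Tilde{\mathbf{s}}}}P_\mathrm{t}(\mathbf{c}_{\Tilde{\mathbf{s}}})$, the per-weight upper bound $P_\mathrm{t}(\mathbf{c}_{\Tilde{\mathbf{s}}}^\prime)\leq f(w)$ from Lemma~\ref{lemma:choice1} plus the pigeonhole step, equality at structured codewords via Lemmas~\ref{lemma:choice2} and~\ref{lemma:choice3}, the folding $f(w)=f(n-w)$ from Lemma~\ref{lemma:choice4}, and the range $\frac{n}{2^r}\leq w\leq\frac{n}{2}$ from the minimum distance. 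The one step you declare unresolved --- showing that no achievable weight \emph{without} a structured representative $[\mathbf{1}_w,\mathbf{0}_{n-w}]$ can have $f$-value exceeding $\max_{w\in W^*}f(w)$ --- is precisely the step the paper does not prove either. The paper's proof jumps from Lemma~\ref{lemma:choice3} to the assertion that the per-weight maximizer ``is always'' $[\mathbf{1}_w,\mathbf{0}_{n-w}]$ and then restricts the outer maximization to structured weights, even though Lemma~\ref{lemma:choice3} only delivers per-weight domination \emph{when} such a codeword of that weight lies in the subcode. So your attempt is substantively as complete as the paper's, and your flagged ``crux'' is a genuine soft spot in the published argument rather than a deficiency relative to it; identifying it is the most valuable part of your write-up.

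Two refinements on how you propose to close the gap. First, the folding can be made unconditional: $f(w)=f(n-w)$ is a pure binomial identity (substitute $w_t\mapsto l-w_t$), and for any subcode codeword $\mathbf{c}\neq\mathbf{1}_n$ the complement $\mathbf{c}+\mathbf{1}_n$ lies in $\mathrm{RM}(m,j)$, where $j$ is the largest order actually reached by the $M$ source rows; hence every achievable weight folds into $[2^{m-j},\frac{n}{2}]$, and the subcode always contains the structured row $[\mathbf{1}_{2^{m-j}},\mathbf{0}_{n-2^{m-j}}]$ (the bottom row of the top $\mathbf{G}_j$ block is included whenever any row of that block is). Second, your parenthetical claim that $f$ is non-monotone on $[\frac{n}{2^r},\frac{n}{2}]$ with maximizer possibly at $\frac{n}{2}$ appears to be backwards: $f(w)$ is the peak probability of a hypergeometric distribution with $l$ draws from a population with $w$ ones, and this peak decreases as $w$ moves toward $\frac{n}{2}$ (the distribution spreads). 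If that monotonicity is established, your cross-weight comparison closes immediately --- every achievable folded weight is at least $2^{m-j}$, where the structured codeword exists and $f$ is largest --- whereas if $f$ really behaved as you conjecture, your own repair plan via unimodality between consecutive structured weights would be considerably harder to execute. Either way, neither you nor the paper has written down this last comparison, and it is what a fully rigorous proof of the stated equality still requires.
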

\begin{proof}
    Based on Lemma~\ref{lemma:choice3}, we know that the nonzero codeword $\mathbf{c}_{\Tilde{\mathbf{s}}}$ with the largest $P_\mathrm{t}(\mathbf{c}_{\Tilde{\mathbf{s}}})$ among all valid nonzero codewords $\mathbf{c}_{\Tilde{\mathbf{s}}}^\prime$ with identical weights $\mathrm{wt}(\mathbf{c}_{\Tilde{\mathbf{s}}})=\mathrm{wt}(\mathbf{c}_{\Tilde{\mathbf{s}}}^\prime)=w$ is always the nonzero codeword $\mathbf{c}_{\Tilde{\mathbf{s}}}=[\mathbf{1}_w,\mathbf{0}_{n-w}]$, such that $P_{\mathrm{S}}=\max_{w} P_\mathrm{t}(\mathbf{c}_{\Tilde{\mathbf{s}}}=[\mathbf{1}_w,\mathbf{0}_{n-w}])$.
    Thus, by maximizing $P_\mathrm{t}(\mathbf{c}_{\Tilde{\mathbf{s}}}=[\mathbf{1}_w,\mathbf{0}_{n-w}])$ from choosing the weight $w$ from all valid nonzero codewords with $[\mathbf{1}_w,\mathbf{0}_{n-w}]$'s naturally meets $P_{\mathrm{S}}$. 
    Lemma~\ref{lemma:choice4} indicates that, when $w^\prime=n-w$, we have $P_\mathrm{t}(\mathbf{c}_{\Tilde{\mathbf{s}}}=[\mathbf{1}_w,\mathbf{0}_{n-w}])=P_\mathrm{t}(\mathbf{c}_{\Tilde{\mathbf{s}}}^\prime=[\mathbf{1}_{w^\prime},\mathbf{0}_{n-w^\prime}])$. 
    Then, when $w=w^\prime=\frac{n}{2}$, we have $\mathbf{c}_{\Tilde{\mathbf{s}}}=\mathbf{c}_{\Tilde{\mathbf{s}}}^\prime$. 
    Therefore, the optimization problem only has to consider $w\leq \frac{n}{2}$ as the upper bound. 
    Combining the lower bound which depends on the chosen $\mathrm{RM}(m,r)$ and $M$, as $\frac{n}{2^r} \leq w$, yields $\frac{n}{2^r} \leq w\leq \frac{n}{2}$.
\end{proof}



\begin{remark}
    The proposed projective construction can be, in principle, generalized to all binary linear codes. RM-A-code, in fact, is a special case for constructing such a projective construction for systematic authentication codes. The structure of RM codes allows us to express the range of $w$'s explicitly in the statement of Theorem~\ref{thm:ps}, which reduces the computational overhead significantly compare to the expression for $P_{\mathrm{S}}$ in \eqref{eq:P_S_worst}. For general codes, finding the range of $w$'s for which codewords of the form $[\mathbf{1}_{w},\mathbf{0}_{n-w}]$ exist is difficult and can be exponentially complex. 
\end{remark}

    

To summarize, we would like to highlight the theoretical results of $P_\mathrm{I}$ and $P_\mathrm{S}$ in Theorems~\ref{thm:pi} and \ref{thm:ps}, respectively, for the proposed projective construction, RM-A-codes, as follows:
\begin{equation}
    P_\mathrm{I}=\frac{1}{2^l},
\end{equation}
and
\begin{equation}
    \begin{aligned}
        P_{\mathrm{S}}
        =\max_w\max_{w_t}\frac{{{w \choose w_t}{n-w \choose l-w_t}}}{{n \choose l}},
    \end{aligned}
\end{equation}
where $\frac{n}{2^r} \leq w\leq \frac{n}{2}$.

\subsection{Numerical Analysis}

Next, we provide numerical results for the proposed RM-A-codes with different blocklengths. The results are shown in TABLE~\ref{tab:num_rates}. In this setting, the source length is set as $M=4$, tag length as $l=3$, and the order of the RM code is $r=1$, together with $m=\{4,5,6,7,8\}$. In TABLE~\ref{tab:num_rates}, it can be observed that $P_{\mathrm{S}}$ decreases while the blocklength increases. Furthermore, since $l$ is fixed as $l=3$, we have $P_{\mathrm{I}}$ as a constant $P_{\mathrm{I}}=\frac{1}{2^l}=0.125$, which meets the lower bound. We leave a more thorough numerical analysis for our future works.
\begin{table}[h]  
\normalsize
\centering
\begin{tabular}{|c||c|c|c|c|c|}
 \hline
$m$&$4$&$5$& $6$ & $7$ & $8$\\
 \hline
 Size& $(16,5)$ & $(32,6)$ & $(64,7)$ & $(128,8)$ & $(256,9)$\\
 \hline\hline
 $P_{\mathrm{I}}$& $0.125$ & $0.125$ & $0.125$ & $0.125$ & $0.125$\\ 
 \hline
 $P_{\mathrm{S}}$& $0.4$ & $0.3817$ & $0.3810$ & $0.3780$ & $0.3765$\\
 \hline
\end{tabular}
\caption{$M=4$, $l=3$, and $r=1$, varies $m$}
  \label{tab:num_rates}
\end{table}




\section{Conclusion}
\label{sec:con}
In this paper, we proposed a projective construction of systematic authentications based on binary linear codes, and studied a particular case based on RM codes, referred to as the RM-A-codes. The theoretical results are provided for the probabilities of deception. Furthermore, we have discussed explicit connections between the probability of success for the substitution attack and the RM code structure, which captures certain properties in the structures of error-correcting codes that are not very well understood. 
A potential direction for future work is to extend the projective construction for systematic authentications to more general classes of binary linear codes.

\appendix

\subsection{Linearity of RM-A-codes}
\label{app:linearity}
\begin{lemma}[linearity]
\label{lemma:linear}
    Given $\mathbf{s},\mathbf{s}^\prime\in\mathcal{S}=\{0,1\}^M$ and a generator matrix $\mathbf{G}$ from $\mathrm{RM}(m,r)$, we have $\mathbf{c}_{\mathbf{s}}+\mathbf{c}_{\mathbf{s}^\prime}=\mathbf{c}_{\mathbf{s}+\mathbf{s}^\prime}$.
\end{lemma}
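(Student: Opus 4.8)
The plan is to exploit the fact that the only possible source of nonlinearity in the map $\mathbf{s}\mapsto\mathbf{c}_{\mathbf{s}}$ lies in the embedding step $\mathbf{s}\mapsto\mathbf{u}$, and to show that this embedding is in fact $\mathbb{F}_2$-linear. Recall from Definition~1 that $\mathbf{c}_{\mathbf{s}}=\mathbf{u}\mathbf{G}$ with $\mathbf{u}=[\mathbf{0}_{\scriptscriptstyle \sum_{i=0}^r{m \choose i}-M-1},\mathbf{s},0]$. First I would make the dependence explicit by writing $\mathbf{u}_{\mathbf{s}}$ for the information vector associated with the source $\mathbf{s}$, so that $\mathbf{c}_{\mathbf{s}}=\mathbf{u}_{\mathbf{s}}\mathbf{G}$.

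The key step is to observe that the embedding $\mathbf{s}\mapsto\mathbf{u}_{\mathbf{s}}$ simply places the coordinates of $\mathbf{s}$ into a fixed block of positions and sets every remaining (frozen) coordinate to $0$. Because all frozen coordinates equal the additive identity, this map is linear over $\mathbb{F}_2$: forming $\mathbf{s}+\mathbf{s}'$ coordinatewise and then embedding produces the same vector as embedding $\mathbf{s}$ and $\mathbf{s}'$ separately and adding, since the frozen zeros add to zero. Hence $\mathbf{u}_{\mathbf{s}+\mathbf{s}'}=\mathbf{u}_{\mathbf{s}}+\mathbf{u}_{\mathbf{s}'}$.

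I would then conclude by invoking the distributivity of matrix multiplication over $\mathbb{F}_2$, writing $\mathbf{c}_{\mathbf{s}+\mathbf{s}'}=\mathbf{u}_{\mathbf{s}+\mathbf{s}'}\mathbf{G}=(\mathbf{u}_{\mathbf{s}}+\mathbf{u}_{\mathbf{s}'})\mathbf{G}=\mathbf{u}_{\mathbf{s}}\mathbf{G}+\mathbf{u}_{\mathbf{s}'}\mathbf{G}=\mathbf{c}_{\mathbf{s}}+\mathbf{c}_{\mathbf{s}'}$, which is the claimed identity.

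The argument is essentially immediate, so there is no substantive obstacle; the single point that deserves care is that the embedding must be linear rather than merely affine. This is precisely what Proposition~\ref{prop:u} guarantees by freezing the last entry of $\mathbf{u}$ to $0$ instead of to a nonzero constant. Had any frozen coordinate been set to $1$, the embedding would acquire a constant offset, the map $\mathbf{s}\mapsto\mathbf{c}_{\mathbf{s}}$ would become affine, and the identity would fail---which is exactly the structural mechanism behind Lemma~\ref{lemma:u_complement} and the degenerate $P_{\mathrm{S}}=1$ behavior of Lemma~\ref{lemma:Ps_1}.
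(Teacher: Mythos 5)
Your proposal is correct and follows essentially the same route as the paper's proof: both expand $\mathbf{c}_{\mathbf{s}}$ and $\mathbf{c}_{\mathbf{s}^\prime}$ via the zero-padded embedding $\mathbf{u}=[\mathbf{0}_{\scriptscriptstyle \sum_{i=0}^r{m \choose i}-M-1},\mathbf{s},0]$ and apply distributivity of the vector--matrix product over $\mathbb{F}_2$, with your additional remark on linear versus affine embeddings being a sound (if not strictly necessary) elaboration of the same mechanism the paper uses. No gap to report.
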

\begin{proof}
   Recall that $\mathbf{c}_\mathbf{s}=[\mathbf{0}_{\scriptscriptstyle \sum_{i=0}^r{m \choose i}-M-1},\mathbf{s},0]\mathbf{G}$ and $\mathbf{c}_{\mathbf{s}^\prime}=[\mathbf{0}_{\scriptscriptstyle \sum_{i=0}^r{m \choose i}-M-1},\mathbf{s}^\prime,0]\mathbf{G}$. Thus, we have
   \begin{equation*}
   \begin{aligned}
\mathbf{c}_\mathbf{s}+\mathbf{c}_{\mathbf{s}^\prime}
       =&[\mathbf{0}_{\scriptscriptstyle \sum_{i=0}^r{m \choose i}-M-1},\mathbf{s},0]\mathbf{G}+[\mathbf{0}_{\scriptscriptstyle \sum_{i=0}^r{m \choose i}-M-1},\mathbf{s}^\prime,0]\mathbf{G}\\
       =&[\mathbf{0}_{\scriptscriptstyle \sum_{i=0}^r{m \choose i}-M-1},\mathbf{s}+\mathbf{s}^\prime,0]\mathbf{G}=\mathbf{c}_{\mathbf{s}+\mathbf{s}^\prime}.
    \end{aligned}
    \end{equation*}
    The proof simply follows from the linearity of the code and also holds for any other linear codes.
\end{proof}

\bibliographystyle{IEEEtran}
\bibliography{Paper} 
\end{document}